\newcolumntype{P}[1]{>{\centering\arraybackslash}p{#1}}
\def\R{\mathbb{R}}
\def\T{\mathcal{T}}
\newcommand{\tminus}{\mathcal{T}_{-}}
\newcommand{\tplus}{\mathcal{T}_{+}}
\newcommand{\abs}[1]{\left\lvert #1 \right\rvert}
\newcommand{\norm}[1]{\left\| #1 \right\|}
\newcommand{\gauss}{\mathcal{N}}
\newcommand{\ind}{\mathbb{I}}
\def\argmax{\mathop{\rm argmax}}
\newtheoremstyle{general}
{3mm} 
{3mm} 
{\it} 
{} 
{\bfseries} 
{.} 
{.5em} 
{} 
\theoremstyle{general}
\newtheorem{lemma}{Lemma}
\newtheorem{theorem}{Theorem}
\newtheorem{corollary}{Corollary}
\newtheorem{assumption}{Assumption}
\renewenvironment{proof}[1][\proofname]{\par
    \pushQED{\qed}%
    \normalfont \topsep6\p@\@plus6\p@\relax
    \trivlist
    \item\relax{
        \bfseries
        #1\@addpunct{.}}\hspace\labelsep\ignorespaces
    }{%
     \popQED\endtrivlist\@endpefalse
     }
\begin{document}

\begin{frontmatter}
\title{Nonparametric method of structural break detection in stochastic time series regression model}
\runtitle{ Nonparametric break detection in time series}

\begin{aug}
\author[IIMK]{\fnms{Archi} \snm{Roy}\,\ead[label=e1]{archiroy@iimk.ac.in}},
\author[IISERP]{\fnms{Moumanti} \snm{Podder}\,\ead[label=e2]{moumanti@iiserpune.ac.in}},
\author[IIMB]{\fnms{Soudeep} \snm{Deb}\,\ead[label=e3]{soudeep@iimb.ac.in}}

\address[IIMK]{Indian Institute of Management Kozhikode, IIMK Campus P. O, Kunnamangalam, Kerala 673570, India.}
\address[IISERP]{Indian Institute of Science Education and Research, Dr Homi Bhabha Rd, Pune, Maharashtra 411008, India.}
\address[IIMB]{Indian Institute of Management Bangalore, Bannerghatta Main Rd, Bangalore, Karnataka 560076, India.}

\runauthor{Roy, Podder and Deb}
\end{aug}

\begin{abstract}
We propose a novel nonparametric test to detect structural breaks in the conditional mean and/or variance of a time series. Our method does not assume any specific parametric form for the dependence structure of the regressor, the time series model, or the distribution of the noise. This flexibility allows our algorithm to be applicable to a wide range of framework. We further apply the proposed test to accurately localize the changepoints and establish theoretical guarantees showing that the estimated structural breaks are consistent, meaning they lie sufficiently close to the true breakpoints when a sufficiently large sample is available. The effectiveness of the proposed algorithm is demonstrated through an extensive simulation study encompassing a diverse range of time series structures, including light, moderately heavy, and heavy tailed distributions. We also show a real-life example, where an application to Bitcoin prices and Google search volume illustrates how the procedure can identify changes in the conditional relationship between market attention and price dynamics.
\end{abstract}

\begin{keyword}
\kwd{Change-point problems}
\kwd{Nonparametric/semiparametric statistics}
\kwd{Time series analysis}
\end{keyword}

\end{frontmatter}

\section{Introduction}

Detection of structural breaks is a crucial part of analysis and forecasting of time series data, for which the topic has received much attention in diverse fields including finance \citep{andreou2009structural}, climate research \citep{beaulieu2012change}, and others. The general problem of structural break detection concerns the inference of a change in distributional characteristics for a set of time-ordered observations. The detection can be sequential \citep[see][and relevant references therein]{aue2024state} or retrospective \citep[see][for a brief review]{truong2020selective}. The latter stream of literature can be broadly classified into parametric and nonparametric detection procedures. The parametric methodologies assume the underlying data generating process to follow a known distribution or a functional form. For example, \cite{robbins2020fully,bai2023multiple} developed methods of detecting changes in linear regression model, \cite{robbins2016general} developed a test for structural breaks in a linear regression model with autoregressive moving average (ARMA) residuals using a Wald test statistic, while more recently \cite{andersen2022testing} proposed a sup-Wald type test for structural breaks under vector autoregressive dynamics. On the other hand, sequential Monte Carlo methods have been developed \citep{chen2011detection} for detecting breaks in generalized autoregressive conditional heteroskedastic (GARCH) models and stochastic volatility models, both of which are common techniques to deal with financial datasets. A major issue with such parametric procedures is their specific assumptions about the structure of the underlying process, which may be impractical in real-life applications.

Parallelly, there have been multiple attempts in the extant literature to detect structural breaks avoiding the imposition of any parametric structure on the concerned time series. For example, \cite{liu2013change} detected breaks in the path of a time series by looking at the f-divergence  measure between the likelihood of the data in consecutive scanning windows in either side of potential break-points.  \cite{haynes2017computationally} proposed a cost function based on an initial segmentation of the data where the position of breaks are determined as the solution of the optimal segmentation that maximizes the log-likelihood obtained from the empirical distribution function. Similar work in the same direction were done by \cite{diop2023general}. In parallel, \cite{zhang2018unsupervised} proposed a self-normalized test procedure based on a cumulative sum (CUSUM) type test statistic to detect breaks in mean or other distributional properties of the time series. \cite{sundararajan2018nonparametric} proposed a test of detection of structural breaks in the covariance structure of multivariate time series utilizing the Euclidean difference in the spectral density matrices. More recently, \cite{fu2023multiple} proposed a methodology for detecting structural breaks in distributional properties of a time series with the help of the empirical distribution functions, and \cite{casini2024change} proposed a nonparametric algorithm for detecting structural breaks in the spectral density of a locally stationary time series.

The extant literature suggest that a common approach in this stream of literature is to detect changes in some statistical property of a time-evolving variable by assuming it to be consistent between consecutive breaks. These works often focus on optimizing a well-defined loss function obtained from evaluating that certain quantity of interest in two segments of the time series sample \citep[see][among others]{gosmann2022sequential}. A contrasting procedure is to identify structural breaks through the maximization of a loss function computed based on the regression curves in the different segments of the data. In this technique, rather than assuming a property to be constant between two consecutive breaks, it is considered to be an arbitrary smooth curve between breaks. Our proposed methodology aligns with this setup in particular. Many interesting developments in this direction have been done using a fixed design model $Y_i=f(x_i)+\epsilon_i$, where $Y_i$'s are the observed response, $x_i$'s are deterministic terms, $\epsilon_i$'s are independent model errors and $f$ is an unknown smooth function. For example, \cite{xia2015jump} used local linear smoothing to estimate the piece-wise regression curve $f$ assuming different numbers and locations of structural breaks. The optimal positions of the structural breaks are chosen as the ones which minimize the jump information criteria for the model. In the domain of time series regression, \cite{vogt2015testing} worked with the model 
\begin{equation}
\label{eq:eq_lit_review}
    Y_t=\mu(X_t)+\epsilon_t,
\end{equation} 
and tested whether the shape of the mean regression function $\mu(.)$ stays the same for all time points $t$ in the sample space. The test statistic proposed in this paper is based on the kernel-based Euclidean distance between all possible pairs $\mu(u)$ and $\mu(v)$, for $u,v\in\R$. \cite{yang2020change} detected structural breaks under the same model using a CUSUM type statistic assuming the process $\{X_t\}$ to be $\alpha$-mixing. In another pertinent work, \cite{fu2019model} considered a modified version of \eqref{eq:eq_lit_review}, with the conditional mean being a smooth time-varying function $g_t(.)$. Their test was developed utilizing the Fourier transform of $Y_t$ using an instrumental variable to infer if $g_t$ is time invariant. More recently, \cite{cui2023state} considered detection of breaks in a special case of the model \eqref{eq:eq_lit_review} with $X_t=Y_{t-1}$, but with more general assumptions on the true properties of the $\mu(.)$ function. 

There have been fewer works in the time series location-scale model
\begin{equation}
    \label{eq:eq_lit_review_our_model}
    Y_t=\mu(X_t)+\sigma(X_t)\epsilon_t,
\end{equation}
which is an extension of \eqref{eq:eq_lit_review}. In an early study, \cite{wishart2010kink} considered the same model to estimate structural breaks in the first derivative i.e. the slope of the mean regression function $\mu(\cdot)$, by using an extension of the traditional zero-crossing technique developed by \cite{goldenshluger2006optimal}. Their setup allowed for long-range dependence in $\{X_t\}$.

Our focus in this paper is on a similar structure of stochastic regression model, under which we develop a novel structural break detection framework. The contribution of this paper is to provide a unified nonparametric testing framework for structural breaks in conditional mean and conditional variance functions in stochastic time-series regression, while allowing flexible temporal dependence in the covariate process through projection-based dependence measures. This distinguishes the proposed method from existing nonparametric regression changepoint procedures developed under independent or mixing assumptions, and from methods targeting changes in marginal distributions or spectral features rather than conditional regression functionals. Although the main theory is developed under the at-most-one-change (AMOC) assumption, the test remains applicable in multiple change-point scenarios using established sequential testing methods. We further highlight its practical relevance by introducing an efficient localization algorithm. The proposed approach allows us to detect large scale structural shifts in the data and is more robust to short-term anomalies. We make mild assumptions on the properties of these functions and allow for both short and long range dependence in the covariate. As illustrated in \Cref{sec:simulation}, despite making mild assumptions, our proposed procedure is not only computationally less intensive, but is also superior in terms of performance for heavy-tailed financial data. At this point, it is worth highlighting the recent work by \cite{li2024estimating}, which focuses on estimating structural breaks in general moment characteristics of locally stationary time series by testing for parameter instability in moment condition models. In contrast, our approach directly targets structural breaks in the first and second moments by detecting changes in the overall functional behavior of the corresponding moment functions, offering a distinctive and effective alternative.

The rest of the paper is organized in the following way. In \Cref{sec:methods}, we present the mathematical framework of the problem, while \Cref{sec:testing_procedure} contains the proposed methodology of structural break detection and the relevant asymptotic theory. We assess the empirical performance of the proposed methodology for various time series structures, along with moderate to heavy-tailed residual distributions, and present the findings in \Cref{sec:simulation}. Next, \Cref{sec:application} contains an application of the proposed method to cryptocurrency data. We conclude with some necessary and important remarks in \Cref{sec:conclusion}. In the interest of space and flow of the paper, detailed proofs for all results are deferred to the supplementary material.

\section{Mathematical framework}
\label{sec:methods}

\paragraph{Notations} Throughout this paper, $E(\cdot)$ and $V(\cdot)$ are used to indicate the expectation and variance of a random variable. For any matrix $A$, $A_{i,j}$ represents the $(i,j)^{th}$ element. The $p^{th}$ norm will be indicated by $\norm{.}_p$, while $\mathcal{L}^p$ is used for the space of all random variables with finite $p^{th}$ norm. We shall use $\xrightarrow{P}$ for convergence in probability and $\xrightarrow{d}$ for convergence in distribution. For any set $S\subset\mathbb{R}$, denote by $\mathcal{C}^p(S)$ the space of functions with the $q^{th}$ derivative bounded on $S$ for all integers $q\leqslant p$, and let $S(\delta)=\bigcup_{\omega\in S}\{x\mid \abs{x-\omega}\leqslant\delta\}$ denote the $\delta$-neighborhood of $S$. 

We consider the stochastic regression model 
\begin{equation}\label{eq:main-model}
    Y_t=\mu(X_t)+\sigma(X_t)\epsilon_t, \quad \text{for } t\in \T=\{1,2,..,n\},
\end{equation}
where $\{Y_t\}$ and $\{X_t\}$ are two real-valued time series observed over $n$ discrete time points, $\mu:\mathbb{R}\rightarrow\mathbb{R}$ is the conditional mean regression function, $\sigma^2:\mathbb{R}\rightarrow\mathbb{R}^+$ is the conditional variance function and $\{\epsilon_t\}$ is a real-valued independently and identically distributed (iid) random noise process with unit variance. We assume that there exist $\Lambda_1<\Lambda_2$ for which $\{X_t\}$ is almost surely bounded within $\mathcal{X}=[\Lambda_1,\Lambda_2]$. It is important to emphasize that this is a purely technical assumption and in theory the interval $[\Lambda_1,\Lambda_2]$ can be as large as possible. We begin with the at-most-one-change (AMOC) assumption, under which at most a single structural break is present in the time horizon $\T$. Now, the structural break testing problems we work with can be written as:
\begin{equation}
\label{eq:what_to_test_mu}
    \begin{split}
        H_0^\mu &: \mu(X_t)=\mu_0(X_t) \; \forall \; t\in\T\text{ vs } \\
        H_1^\mu &:\mu(X_t)=\mu_1(X_t)\ind\{t\leqslant \tau_0^\mu\} + \mu_2(X_t)\ind\{t > \tau_0^\mu\}
    \end{split}
\end{equation}
and 
\begin{equation}
\label{eq:what_to_test_sigma}
    \begin{split}
        H_0^\sigma &: \sigma(X_t)=\sigma_0(X_t) \; \forall \; t\in\T\text{ vs } \\
        H_1^\sigma &:\sigma(X_t)=\sigma_1(X_t)\ind\{t\leqslant \tau_0^\sigma\} + \sigma_2(X_t)\ind\{t > \tau_0^\sigma\}
    \end{split}
\end{equation}
where $\tau_0^\mu$ and $\tau_0^\sigma$ are the structural breaks in the conditional mean and variance functions, respectively. While the core theoretical results are developed under this assumption, the proposed test is not limited to a single changepoint. In \Cref{subsec:detection-of-structural-breaks}, we extend the methodology to the general multiple changepoint setting by applying the AMOC test sequentially, using a binary segmentation-type algorithm that iteratively partitions the time series and applies the test to each resulting segment until no further breaks are detected. In the above expressions, all mean and variance functions are assumed to be non-periodic in the respective ranges. Note that simultaneous testing of both \eqref{eq:what_to_test_mu} and \eqref{eq:what_to_test_sigma} can be carried out using a simple Bonferroni type correction.  At this stage, it is important to highlight that in all our theoretical derivations, $X_t$ is going to be assumed univariate for convenience, but the proposed methodology and the results apply directly even when $X_t$ is a vector-valued process of finite dimension, under the assumption of decorrelation among the coordinates. We consider the following framework in this regard, which is in line with a plethora of literature in this domain. 

\begin{assumption}
     Error process $\{\epsilon_t\}$ is independent of $\{X_t\}$. The consistency results for the conditional mean function $\mu(\cdot)$ require $\{\epsilon_t\}$ to have finite second moments, while those for the conditional variance function $\sigma^2(\cdot)$ additionally require finite fourth moments.
\end{assumption}
Although finite fourth moments may appear restrictive in the context of financial time series, which are often characterized by heavy tails, we note that this assumption is imposed only on the standardized noise process $\{\epsilon_t\}$ rather than on the observed returns $\{Y_t\}$ directly. Since the conditional variance function $\sigma^2(\cdot)$ absorbs the bulk of the tail behavior in the model \eqref{eq:main-model}, the implied marginal distribution of $\{Y_t\}$ can still exhibit heavy-tailed characteristics even when $\{\epsilon_t\}$ satisfies this moment condition. Indeed, as demonstrated in \Cref{sec:simulation}, the proposed procedure performs well under heavy-tailed noise distributions such as the $t$-distribution and power-law innovations, confirming the applicability of the method beyond the strict confines of this assumption.

\begin{assumption}
\label{Xtdependence}
The time series $\{X_t\}$ is stationary and admits a causal representation driven by a sequence of iid random variables $\{\eta_t\}_{t\in\mathbb{Z}}$. Let $\mathcal{F}_t=\sigma(\eta_s: s \leqslant t)$ denote the innovation history up to time $t$. We assume that $X_t=m(\ldots,\eta_{t-1},\eta_t)$, where $m$ is a measurable function. Let $f_X$ be the marginal density of $X_t$. For $i \geqslant 1$, let $f_{X_i\mid \mathcal{F}_{i-1}}(x)$ denote the conditional density of $X_i$ given $\mathcal{F}_{i-1}$, evaluated at $x$, and let $f'_{X_i\mid \mathcal{F}_{i-1}}(x)$ denote its derivative with respect to $x$. For an integrable random variable $Z$, define the martingale projection operator $P_\ell$ as $P_{\ell}(Z) = E(Z\mid \mathcal{F}_\ell) - E(Z\mid \mathcal{F}_{\ell-1})$, for $\ell\in\mathbb{Z}$. The temporal dependence structure of $\{X_t\}$ is then expressed through the quantity $\Xi_k$, which, for any $k\in\mathbb{N}$, is defined as
\begin{equation*}
\begin{split}
    \Xi_k = k\Theta_{2k}^{2} + \sum_{r=k}^{\infty} \left(\Theta_{k+r}-\Theta_r\right)^2,
    \quad
    \Theta_k=\sum_{i=1}^{k}\theta_i, \\
    \text{where } \theta_i = \sup_{x\in\mathbb{R}} \norm{P_0 f_{X_i\mid \mathcal{F}_{i-1}}(x)} + \sup_{x\in\mathbb{R}} \norm{P_0 f'_{X_i\mid \mathcal{F}_{i-1}}(x)}.
\end{split}
\end{equation*}
Note that the projection operator is applied to the conditional density process, which is random through the conditioning sigma-field, rather than to the marginal density $f_X$.
\end{assumption}

Hereafter, the joint filtration generated by $\{\eta_{t},\epsilon_{t}\}$ is denoted by $\mathcal{G}_{t}$. It should be noted that the term $\theta_i$ is a rough quantification of the contribution of $\eta_0$ in predicting $X_i$. Smaller values of $\theta_i$ would denote weaker dependence of $X_i$ on the previous observations of the process. Following this argument, we say that $\{X_t\}$ is a short-range dependent (SRD) process if $\Theta_{\infty} < \infty$. If $\Theta_\infty=\infty$, with sufficiently slow decay of $\theta_i$, we refer to the process as long-range dependent (LRD). Note that our proposed framework allows for both short and long--range dependence in the covariate $X$.  It can be shown that for SRD processes,  $\Xi_{n}=O(n)$, whereas for LRD processes the rate of decay of $\Xi_n$ is much slower. For example, if $\theta_i$ is of the form  $l(i)/i^\beta$, where $\beta>1/2$ and $l(.)$ is a slowly varying function, then we can write $\Xi_n=O\left(n^{3-2\beta}l^{2}(n)\right)$ or $\Xi_n=O\left(n\bar{l}^2(n)\right)$ where  $\bar{l}(n)=\sum_{i=1}^{n} \abs{l(i)}/i$. This follows from a straightforward application of Karamata's theorem. Interested readers may refer to \cite{wu2003empirical}. 

\begin{assumption}
\label{assumption_functions}
    For some $\delta>0$, $f_X(\cdot)$, $\mu(\cdot)$ and $\sigma(\cdot)$ are four-times differentiable functions in $\mathcal{X}(\delta)$. Also, $\inf_{x\in\mathcal{X}}\{f_X(x)\}>0$ and $\inf_{x\in\mathcal{X}}\{\sigma(x)\}>0$.
\end{assumption}

It is important to highlight that, under \Cref{assumption_functions}, the modeling framework in \eqref{eq:main-model}  covers a fairly large class of traditional time series models. For example, if $X_t=Y_{t-1}$ and $\sigma(\cdot)$ is a constant function, we get the class of AR models. As a special case, if we further put $\mu(x)=ax$  for some constant $a\in\R$, we obtain the linear AR process. Similarly, by setting $\mu(x)=a\max\{x,0\}+b\min\{x,0\}$ or $\mu(x)=a+be^{-cx^2}$, where $a,b,c \in \R$, we obtain the threshold AR and exponential AR processes, respectively. One can also obtain heavy-tailed data generating processes from \eqref{eq:main-model}, for example $\mu(x)=0$ and $\sigma(x)=\sqrt{a+bx^2}$, gives the traditional ARCH process. Further, letting $Y_t=X_{t+1}-X_t$ and assuming $\{\epsilon_t\}$ to be iid Gaussian, we obtain the discretized version of the continuous time stochastic diffusion model $dX_t=\mu(X_t)+\sigma(X_t)dW_t,$ where $\{W_t\}$ is a standard Brownian motion. As pointed out by \cite{fan2005selective}, this structure covers a wide class of financial models. Further, \Cref{Xtdependence} is in line with several linear and nonlinear time series processes \citep{wu2005nonlinear}. 

\section{Theory}
\label{sec:testing_procedure}

Our main objective is to develop a test-based procedure to detect a structural break in a conditional functional of the distribution of $Y_t$ given $X_t=x$. Below, we write $g(x)$ generically for the functional of interest. In this paper we primarily consider two cases: $g(x)=E(Y_t\mid X_t=x)=\mu(x)$ and $g(x)=\operatorname{Var}(Y_t\mid X_t=x)=\sigma^2(x)$. As a first step to develop the test, imagine that we divide the time domain $\T$ into two disjoint halves, denoted by $\tminus$ and $\tplus$ respectively. Under our asymptotic regime, the cardinality of both $\tminus$ and $\tplus$ will approach $\infty$. With a slight abuse of notation, we continue to denote the number of observations in each set by $n$. Assume the true functional characteristic $g(\cdot)$ to be $g_1(\cdot)$ on $\tminus$ and $g_2(\cdot)$ in $\tplus$. In the absence of a structural break, we should have $g_1(x)=g_2(x)$ for all $x\in\mathcal{X}$. However, if a structural break is indeed present, then $g_1(\cdot)$ and $g_2(\cdot)$ will exhibit significant difference over the range $\mathcal{X}$. For a fixed $x \in \mathcal{X}$, let $  g_{\text{diff}}(x) = g_1(x) - g_2(x),$ which, in presence of a structural break, should be large for some $x\in\mathcal{X}$. Therefore, a significantly large value of  $\sup_{x\in\mathcal{X}}\{\abs{g_{\text{diff}}(x)}\}$ indicates the presence of a structural break. Our procedure relies on this phenomenon. In Sections \ref{sec:point-wise_asymptotics} and \ref{sec:supremum_asymptotics}, we provide the estimates of both $g_{\text{diff}}(\cdot)$ and $\sup_{x\in\mathcal{X}}\{\abs{g_{\text{diff}}(x)}\}$ (taking $g$ to be conditional mean or the conditional variance function), and establish their asymptotic theory. We emphasize that while the theory directly leads to a test for detecting the presence of a structural break, the method can be extended to localize them as well. The localization algorithm is detailed in \Cref{subsec:detection-of-structural-breaks}. 

\subsection{Asymptotic theory for point-wise estimates}
\label{sec:point-wise_asymptotics}

In this subsection, we derive the asymptotic properties of $g_{\text{diff}}(x)$ for a fixed $x\in\mathcal{X}$, where $g$ is the conditional mean function $\mu(\cdot)$, or the conditional variance function $\sigma^2(\cdot)$. In line with \eqref{eq:what_to_test_mu}, let us use $\mu_1(\cdot)$ and $\mu_2(\cdot)$ to denote the mean function in the segments $\tminus$ and $\tplus$, respectively. For a fixed $x\in\mathcal{X}$, the point-wise disparity between these two functions is given by $\mu_{\text{diff}}(x) = \mu_{1}(x) - \mu_{2}(x)$. We consider a Nadaraya-Watson type estimator $  \widehat\mu_{\text{diff}}(x) = \widehat\mu_{1}(x) - \widehat\mu_{2}(x)$, with
\begin{equation*}
    \widehat \mu_{1}(x) = \frac{1}{nb_n \widehat f_X(x)} \sum_{\tminus}{Y_{t} K\left(\frac{x - X_{t}}{b_n}\right)}, \;
    \widehat \mu_{2}(x) = \frac{1}{nb_n \widehat f_X(x)} \sum_{\tplus}{Y_{t} K\left(\frac{x - X_{t}}{b_n}\right)},
\end{equation*}
where $\widehat f_{X}(x) = (nb_n)^{-1} \sum_{\T}{K\left((x - X_{t})/b_n\right)}$ is the estimated density of the covariate process $\{X_t\}$, $K(\cdot)$ is an appropriately chosen kernel function, and $b_n=b(n)$ is a bandwidth sequence satisfying the following assumption.

\begin{assumption}
\label{asmp:kernel-bw-conditions}
    The kernel function $K(.)$ is symmetric, bounded, has bounded derivative and bounded support $[-1,1]$. The bandwidth sequence $b_n=b(n)$ satisfies $b_n\rightarrow 0$ and  $nb_n\rightarrow \infty$.
\end{assumption}

Akin to above, denote the conditional variance function $\sigma^2(\cdot)$ in the segments $\tminus$ and $\tplus$ as $\sigma_1^2(\cdot)$ and $\sigma_2^2(\cdot)$ respectively. For a fixed $x\in\mathcal{X}$, the point-wise difference $\sigma^2_{\text{diff}}(x)=\sigma_1^2(x)-\sigma_2^2(x)$ is estimated as  $\widehat\sigma^2_{\text{diff}}(x) = \widehat\sigma_1^2(x)-\widehat\sigma_2^2(x), $ where
\begin{equation*}
\begin{split}
    \widehat\sigma_1^2(x) &= \frac{1}{nb_n\widehat f_X(x)}\sum_{\tminus}{(Y_{t}-\widehat{\mu}_{1}(X_{t}))^{2}K\left(\frac{x-X_{t}}{b_n}\right)}, \\
    \widehat\sigma_2^2(x) &= \frac{1}{nb_n\widehat f_X(x)}\sum_{\tplus}{(Y_{t}-\widehat{\mu}_{2}(X_{t}))^{2}K\left(\frac{x-X_{t}}{b_n}\right)}.
\end{split}
\end{equation*}

Note that we have used the same bandwidth $b_n$ for the estimation of both the conditional mean and variance functions. One can also use a different bandwidth sequence $h_n$ for the estimation of variance. It does not affect the theoretical results provided that  $\lim_{n\rightarrow\infty}\abs{h_n/b_n}$ is bounded away from $0$ and $\infty$. More details on the choice of the bandwidth are provided in Section S1 of the supplementary materials. For the kernel function $K(\cdot)$, define $\phi(K)=\int_{\mathbb{R}}{(K(u))^{2}du}$ and $\psi(K)=\int_{\mathbb{R}} (u^{2}/2)K(u)du$. The following results describe the asymptotic point-wise behavior of $\widehat{\mu}_{\text{diff}}(\cdot)$, under specific conditions on whether the variance function should be assumed to be same for the entire time horizon or not. 

\begin{theorem}
\sloppy
\label{lem:theorem1}
Along with the assumptions in \Cref{sec:methods}, assume that the conditional variance function remains the same throughout the time domain $\T$, i.e., $\sigma_1^2(x)=\sigma_2^2(x)=\sigma^2(x)$ for all $x$, and suppose the bandwidth satisfies $ nb_n^9+\frac{1}{nb_n}+\Xi_{n}\left(\frac{b_n^3}{n}+\frac{1}{n^2}\right)\xrightarrow{n\rightarrow\infty}0$. Fix $x\in\mathcal{X}$ such that $f_X(x)>0,\sigma(x)>0$ and $f_X,\mu\in\mathcal{C}^4\left(x-\delta,x+\delta\right)$ for some $\delta>0$. Then, as $n\rightarrow\infty$,
\begin{equation*}
\label{eq:eqth1}
{\small
        \frac{\sqrt{nb_n\widehat f_X(x)}}{\sqrt{2\phi(K)\widehat\sigma^2(x)}} \bigg[ \widehat\mu_{\text{diff}}(x)- \mu_{\text{diff}}(x) -  \left(b_n^2\psi(K)\rho_{\mu_{1}}(x)-b_n^2\psi(K)\rho_{\mu_{2}}(x)\right)\bigg]
    \xrightarrow{d}\gauss\left(0,1\right),
    }
\end{equation*}
where $\widehat\sigma^2(x)$ is a consistent nonparametric estimate of the common conditional variance function, and the asymptotic bias of the estimate is defined through 
\begin{equation*}
        \rho_{\mu_1}(x) = \mu_1''(x)+2\mu_1'(x)\frac{f_{X}'(x)}{f_{X}(x)}, \;
        \rho_{\mu_2}(x) = \mu_2''(x)+2\mu_2'(x)\frac{f_{X}'(x)}{f_{X}(x)}.
\end{equation*}
\end{theorem} 

It is imperative to point out that, in practical applications, the true conditional mean functions $\mu_1(.)$ and $\mu_2(.)$ are not known. Hence, the bias terms involving $\rho_{\mu_1}(\cdot)$ and $\rho_{\mu_2}(\cdot)$ need to be estimated as well. Following \cite{wu2007inference}, we avoid this by utilizing a jackknife correction, which is equivalent to estimating $\widehat\mu_1(\cdot)$ and $\widehat\mu_2(\cdot)$ using the kernel $K^{*}(u)=2K(u)-K(u/\sqrt{2})/\sqrt{2}$. Note that $K^{*}$ has support $[-\sqrt{2},\sqrt{2}]$. Hereafter, the bias-corrected estimates will be denoted with $*$ above them, e.g., $\widehat\mu_1^*(\cdot)$. 

\begin{corollary}
\label{lem:corollary1}
    Under the assumptions stated in \Cref{lem:theorem1}, 
\begin{equation*}
    \label{eq:eqmodth1}
    \frac{\sqrt{nb_n\widehat f_X(x)}}{\sqrt{2\phi(K)\widehat\sigma^2(x)}}\left[\widehat\mu_{\text{diff}}^*(x)-\left(\mu_1(x)-\mu_2(x)\right)\right]\xrightarrow{d}\gauss\left(0,1\right),
\end{equation*}
where $\widehat\mu_{\text{diff}}^*(x)=\widehat\mu_1^*(x)-\widehat\mu_2^*(x)$ is the estimate of $\mu_{\text{diff}}(x)$ using the kernel $K^*$.
\end{corollary}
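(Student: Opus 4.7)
The plan is to derive the corollary as an essentially immediate consequence of Theorem~\ref{lem:theorem1} once one verifies that the jackknife kernel $K^*(u)=2K(u)-(1/\sqrt{2})K(u/\sqrt{2})$ annihilates the leading $O(b_n^2)$ bias. First I would check that $K^*$ satisfies Assumption~\ref{asmp:kernel-bw-conditions}: symmetry, boundedness, and bounded derivative follow from the corresponding properties of $K$ by linearity; the support widens only to $[-\sqrt{2},\sqrt{2}]$, still compact; and a change of variables gives $\int K^*(u)\,du = 2 - 1 = 1$, so $K^*$ is a valid (signed) kernel. Consequently, the stochastic machinery behind Theorem~\ref{lem:theorem1} --- the decomposition into bias and variance pieces, the martingale central limit theorem for the variance part, and the control of cross-terms via $\Xi_n$ --- transfers verbatim with $K$ replaced by $K^*$.

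The substantive step is the second-moment computation. Substituting $v = u/\sqrt{2}$ in the second integral below,
\begin{equation*}
    \int u^2 K^*(u)\,du \;=\; 2\int u^2 K(u)\,du - \frac{1}{\sqrt{2}} \int u^2 K(u/\sqrt{2})\,du \;=\; 2\int u^2 K(u)\,du - 2\int v^2 K(v)\,dv \;=\; 0,
\end{equation*}
so $\psi(K^*) = 0$. Applying the conclusion of Theorem~\ref{lem:theorem1} separately to $\widehat\mu_1^*$ and $\widehat\mu_2^*$ with $K$ replaced by $K^*$, the two bias contributions $b_n^2\,\psi(K^*)\,\rho_{\mu_j}(x)$ vanish identically. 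The next surviving bias term is of order $b_n^4$, and the bandwidth condition $nb_n^9 \to 0$ already imposed in Theorem~\ref{lem:theorem1} gives $\sqrt{nb_n}\,b_n^4 = o(1)$, so this higher-order remainder is absorbed into the $o_P(1)$ error. Subtracting the two limiting statements and collecting the variance normalization then yields the asserted convergence.

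The one genuine bookkeeping subtlety --- and the only place where some care is needed --- is the kernel-dependent variance constant: when $K^*$ is the estimator's kernel, the denominator should literally read $\sqrt{2\phi(K^*)\widehat{\sigma}^2(x)}$, and $\phi(K)$ in the corollary's statement must be interpreted as the $\phi$-constant of whichever kernel is actually deployed. A secondary point is that the enlarged support $[-\sqrt{2}\,b_n,\sqrt{2}\,b_n]$ around $x$ must remain inside $\mathcal{X}(\delta)$ where the Taylor expansions of $\mu_j$ and $f_X$ are valid; this is automatic for $n$ large, since the fixed $\delta$ of Assumption~\ref{assumption_functions} eventually dominates $\sqrt{2}\,b_n \to 0$. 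Beyond these cosmetic adjustments, the corollary reduces to a direct application of Theorem~\ref{lem:theorem1}.
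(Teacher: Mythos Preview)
Your proposal is correct and follows exactly the paper's approach: the authors simply remark that the corollary ``is straightforward by noting that $\psi(K^*)=0$ for the modified kernel function,'' which is precisely the second-moment computation you carry out. Your additional verifications (that $K^*$ satisfies the kernel assumptions, that the enlarged support eventually sits inside $\mathcal{X}(\delta)$, and that the variance constant should in principle be $\phi(K^*)$) are all valid and more careful than what the paper records, but the core argument is identical.
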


The proof of \Cref{lem:corollary1} is straightforward by noting that $\psi(K^*)=0$ for the modified kernel function. In the following corollary, we establish the point-wise behavior of the $\widehat\mu_{\text{diff}}^*(.)$ function under the setup where we allow for the possibility of a structural break in the conditional variance $\sigma^2(\cdot)$.

\begin{corollary}
\label{lem:corollary2}
    Assume that the conditional variance function may have different behavior in $\tminus$ and $\tplus$, and is estimated separately in the two segments as $\widehat\sigma_1^2(x)$ and $\widehat\sigma_2^2(x)$. Then, letting $\widehat S(x)=\widehat\sigma_1^2(x)+\widehat\sigma_2^2(x)$, under the same conditions specified in \Cref{lem:theorem1}, for a fixed $x\in\mathcal{X}$ as $n\rightarrow\infty$,
\begin{equation*}
\label{eq:eqprop2}
     \frac{\sqrt{nb_n\widehat f_X(x)}}{\sqrt{\phi(K)\widehat S(x)}}\left[\widehat\mu_{\text{diff}}^*(x)-\left(\mu_{1}(x)-\mu_{2}(x)\right)\right]\xrightarrow{d}\gauss\left(0,1\right).
 \end{equation*}
\end{corollary}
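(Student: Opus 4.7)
The plan is to adapt the proof of \Cref{lem:theorem1} by keeping the variance contributions from $\tminus$ and $\tplus$ separate, since $\sigma_1(\cdot)$ and $\sigma_2(\cdot)$ may now differ. Write
\begin{equation*}
\widehat\mu_{\text{diff}}^*(x) - (\mu_1(x) - \mu_2(x)) = \left[\widehat\mu_1^*(x) - \mu_1(x)\right] - \left[\widehat\mu_2^*(x) - \mu_2(x)\right].
\end{equation*}
Each bracketed term is a Nadaraya--Watson type error built from data lying strictly in one half of $\T$, so the analysis developed for \Cref{lem:theorem1} applies to each in isolation, with the only substantive change being that the relevant conditional variance inside the $i$-th half is $\sigma_i^2(x)$ rather than a common $\sigma^2(x)$.

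For $i\in\{1,2\}$, the first step is to decompose $\widehat\mu_i^*(x) - \mu_i(x)$ into a bias and a stochastic part, exactly as in the proof of \Cref{lem:theorem1}. The bias contribution is $b_n^2\psi(K^*)\rho_{\mu_i}(x) = 0$, since $\psi(K^*) = 0$ by construction---this is precisely the observation used in \Cref{lem:corollary1}. The stochastic part, after multiplication by $\sqrt{nb_n \widehat f_X(x)}$, satisfies a central limit theorem with limiting variance $\phi(K)\sigma_i^2(x)$ under the dependence framework encoded by $\Xi_n$ and the bandwidth restrictions imposed in \Cref{lem:theorem1}; this is the in-half analogue of the argument already carried out for the theorem.

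Next I would argue that the rescaled errors from the two halves are asymptotically independent. Since $\tminus$ and $\tplus$ are disjoint and $\{\epsilon_t\}$ is iid and independent of $\{X_t\}$, every cross-covariance across the two halves reduces to an expression purely in the covariate kernel weights multiplied by the deterministic $\mu_i,\sigma_i$. A computation analogous to the variance bound used in \Cref{lem:theorem1} then shows that these cross terms are of strictly smaller order than the diagonal ones and vanish after rescaling. Joint asymptotic normality of the pair with diagonal limiting covariance follows, and the Slutsky step---using consistency of $\widehat f_X(x), \widehat\sigma_1^2(x), \widehat\sigma_2^2(x)$ and the definition $\widehat S(x) = \widehat\sigma_1^2(x) + \widehat\sigma_2^2(x)$---converts this into the stated limit distribution for the difference with variance $\phi(K)\widehat S(x)/(nb_n\widehat f_X(x))$.

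The main obstacle is controlling the cross-half covariances when $\{X_t\}$ is long-range dependent. While the iid structure of $\{\epsilon_t\}$ annihilates most pairings outright, the kernel weights $K^*((x-X_t)/b_n)$ for $t\in\tminus$ and $t\in\tplus$ can still exhibit nontrivial correlations, and showing that these stay negligible relative to the in-half variances requires combining the $\Xi_n$-based dependence bounds with the bandwidth conditions $nb_n^9 + (nb_n)^{-1} + \Xi_n(b_n^3/n + 1/n^2)\to 0$ from \Cref{lem:theorem1}. Once this is in place, the remainder of the argument is a direct reassembly of steps already developed for \Cref{lem:theorem1} and \Cref{lem:corollary1}.
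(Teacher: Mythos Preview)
Your route is correct, but it is more circuitous than the paper's. The paper does not split $\widehat\mu_{\text{diff}}^*(x)-(\mu_1(x)-\mu_2(x))$ into two pieces, prove a CLT for each, and then argue asymptotic independence. Instead it writes the whole quantity as one martingale difference sum over $\T$, namely $\mathcal{V}_{\T}(x)=(nb_n f_X(x))^{-1}\bigl[\sum_{\tminus}\sigma_1(X_t)\epsilon_tK^*_{b_n}(x-X_t)-\sum_{\tplus}\sigma_2(X_t)\epsilon_tK^*_{b_n}(x-X_t)\bigr]$, and applies the martingale CLT (the machinery of \Cref{lem:lemma2}) once. Because the summands at distinct time points are automatically orthogonal through the iid $\epsilon_t$, the quadratic characteristic is simply the sum of the two halves' contributions and converges to $\sigma_1^2(x)+\sigma_2^2(x)$ times the usual constant; nothing about cross-half dependence needs to be argued separately.

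This also shows that the ``main obstacle'' you flag is not real. The cross-covariance between the $\tminus$ and $\tplus$ stochastic sums is exactly zero, since every cross term carries a factor $E[\epsilon_s\epsilon_t]=0$ with $s\neq t$; it does not ``reduce to an expression purely in the covariate kernel weights'' that must then be bounded. Long-range dependence in $\{X_t\}$ enters only through the convergence of the conditional variance (the $\Xi_n$ bound of \Cref{lem:lemma1}), and that computation is identical whether you run it half by half or on the combined martingale. So your plan works, but the detour through joint normality and the worry about LRD cross-correlations can both be dropped by adopting the single-martingale viewpoint.
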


The proof of \Cref{lem:theorem1} involves expressing $\widehat\mu_{\text{diff}}(x)$ as a martingale difference sequence with respect to the filtration $\left\{\mathcal{F}_t\right\}$, followed by a straightforward application of martingale central limit theorem. Details can be found in Section S2 of the supplementary material. The proof of \Cref{lem:corollary2} is similar. 

As previously discussed, we similarly evaluate the disparity between the conditional variance functions in $\mathcal{T}_{-}$ and $\mathcal{T}_{+}$ using the expression $\widehat{\sigma}^2_{\text{diff}}(x) = \widehat{\sigma}_1^2(x) - \widehat{\sigma}_2^2(x)$. In this formulation, each of the estimates $\widehat{\sigma}_i^2(\cdot)$ depends on the respective conditional mean estimates $\widehat{\mu}_i(\cdot)$, for $i = 1, 2$. However, they inherently possess a bias of order $O(b_n^4)$. Although the asymptotic distribution of $\widehat{\sigma}^2_{\text{diff}}(x)$ can be derived using this bias, for improved performance of the test statistic, we incorporate the bias-corrected mean estimates $\widehat{\mu}_i^*(\cdot)$ in the estimation of the conditional variances. The revised estimates are therefore given as,
\begin{equation*}
\begin{split}
    \widehat\sigma_1^2(x) &=\frac{1}{nb_n\widehat f_X(x)}\sum_{\tminus}{(Y_{t}-\widehat{\mu}_{1}^*(X_{t}))^{2}K\left(\frac{x-X_{t}}{b_n}\right)}, \\
    \widehat\sigma_2^2(x) &=\frac{1}{nb_n\widehat f_X(x)}\sum_{\tplus}{(Y_{t}-\widehat{\mu}_{2}^*(X_{t}))^{2}K\left(\frac{x-X_{t}}{b_n}\right)}.
\end{split}
\label{eq:sigma1hat}
\end{equation*} 

Furthermore, it can be shown that both the above estimates are biased, with the bias being of the order $O(b_n^2)$. We therefore follow the jackknife type correction once again and use the modified kernel $K^*$ to obtain bias-corrected estimates $\widehat\sigma_i^{*^2}(x)$, for $i=1,2$, and define $  \widehat\sigma^{*^2}_{\text{diff}}(x)= \widehat\sigma_1^{*^2}(x)-\widehat\sigma_2^{*^2}(x)$. We next derive the point-wise asymptotic distribution of $\widehat\sigma^{*^2}_{\text{diff}}(.)$. 

\begin{theorem}
\label{lem:theorem2}
Along with the previously stated assumptions in \Cref{sec:methods}, consider the bandwidth condition $b_n^{\frac{3}{2}}\log n+\frac{1}{n^2b_n^5}+\frac{\Xi_n}{n^2}\xrightarrow{n\rightarrow\infty}0$.
Fix an $x\in\mathcal{X}$ such that $f_X(x)>0,\sigma(x)>0$ and $f_X,\mu,\sigma\in\mathcal{C}^4\left(x-\delta,x+\delta\right)$ for some $\delta>0$. If $\nu_\epsilon=E\left(\epsilon_0^4\right)-1$ and $\widehat S(x)$ is as defined in \Cref{lem:corollary2}, then as $n\rightarrow\infty$,
\begin{gather*}
\label{eq:eqprop3}
    \frac{\sqrt{nb_n\widehat f_X(x)}}{\nu_\epsilon \sqrt{\phi(K^*) \widehat S(x) }}\left[\widehat\sigma^{*^2}_{\text{diff}}(x)-\left(\sigma_1^2(x)-\sigma_2^2(x)\right)\right]\xrightarrow{d}\gauss(0,1).
\end{gather*} 
\end{theorem}

The proof of \Cref{lem:theorem2} follows in the same line as \Cref{lem:theorem1}, and the details are furnished in Section S2 of the supplement. Given that we do not impose any specific distributional assumptions on the random noise process, the moments of the distribution are not known and $\nu_\epsilon$ needs to be estimated. Denoting the standardized residuals from the model \eqref{eq:main-model} as $\{\widehat r_t\}$, one may replace the term $\nu_\epsilon$ in \Cref{lem:theorem2} by
\begin{equation*}
        \widehat{\nu_{\epsilon}}= \frac{\sum_{\tminus}{\widehat r_{1_t}^4 1_{\{X_t\in\mathcal{X}_-\}}}+\sum_{\tplus}{\widehat r_{2_t}^4} 1_{\{X_t\in\mathcal{X}_+\}}}{\sum_{\tminus}{1_{\{X_t\in\mathcal{X}_-\}}}+\sum_{\tplus}{1_{\{X_t\in\mathcal{X}_+\}}}}-1,
\end{equation*}
where $\widehat r_{1_t}=\frac{Y_t-\widehat\mu_1^*(X_t)}{\sqrt{\widehat\sigma_{1}^{*^2}(X_t)}}$ and $\widehat r_{2_t}=\frac{Y_t-\widehat\mu_2^*(X_t)}{\sqrt{\widehat\sigma_{2}^{*^2}(X_t)}}$ are the standardized residuals for \eqref{eq:eq_lit_review_our_model} estimated in the segments $\tminus$ and $\tplus$ respectively, and $\mathcal{X}_-$ and $\mathcal{X}_+$ are the ranges of the covariate $X$ in these two segments. It can be shown that under the previously mentioned conditions, $\widehat{\nu}_\epsilon$ is a consistent estimate for $\nu_\epsilon$.

\subsection{Test of presence of structural break}
\label{sec:supremum_asymptotics}

The asymptotic theory developed for the estimates $\widehat\mu_{\text{diff}}(\cdot)$ and $\widehat\sigma^2_{\text{diff}}(\cdot)$ enable us to assess the disparity between the conditional mean and variance functions in two disjoint halves of the data at specific covariate profile $x\in\mathcal{X}$. However, they do not provide sufficient information to determine whether the overall behavior of the functions differ in the two segments. Consider the specific example of the conditional mean function. As we are interested in the equality of the functional characteristic in $\tplus$ and $\tminus$, a change in the global behavior of $\mu(\cdot)$ may be estimated through the test statistic
\begin{equation}\label{eq:sup-mu1-mu2}
     \sup_{x\in\mathcal{X}}\left\{\abs{\widehat{\mu}^*_{\text{diff}}(x)}\right\}=\sup_{x\in\mathcal{X}}\left\{\abs{\widehat\mu_1^*(x)-\widehat\mu_2^*(x)}\right\}.
\end{equation}

Although it is theoretically possible to establish the asymptotic distribution of the above quantity over a continuous range, in practice, nonparametric estimates can only be computed point-wise. Keeping that in view, we present the asymptotic properties of \eqref{eq:sup-mu1-mu2} over a sufficiently fine grid of points in $\mathcal{X}$. As we shall present in the simulation studies later, the tests conducted using this discretization maintain satisfactory performance in practice. We define a sequence of partitions $\{\Pi_n\}$ of $\mathcal{X}$ as $\Pi_n=\left\{x_{t_j} \mid x_{t_j}=\Lambda_1+2jb_n,j=0,1,2,\hdots,m_n-1\right\},$ where $m_n=\left\lceil\frac{\Lambda_2-\Lambda_1}{2b_n}\right\rceil$. Note that the partitions $\{\Pi_n\}$ become dense in $\mathcal{X}$ as $n\rightarrow\infty$. It can be argued that under appropriate smoothness conditions on the true conditional mean function (as mentioned in \Cref{assumption_functions}), $\{\mu(x)\mid x\in\mathcal{X}\}$ can be well approximated by $\{\mu(x)\mid x\in\Pi_n\}$ for a sufficiently large value of $n$. We can therefore conclude that
\begin{equation*}
\label{eq:sup_mu}
    \sup_{x\in\mathcal{X}}\left\{\abs{\widehat{\mu}^*_{\text{diff}}(x)}\right\}\approx \lim_{n\rightarrow\infty}\left[\sup_{x\in\Pi_n}\left\{\abs{\widehat{\mu}^*_{\text{diff}}(x)}\right\}\right].
\end{equation*}

Our interest is in detecting the presence of structural break in the conditional mean function, and we can treat this as a testing of hypothesis problem of the form \eqref{eq:what_to_test_mu}, which can be rewritten in the following form:
\begin{equation}\label{eq:mu-exist-hypothesis}
    \begin{split}
        H_0: & \;  \text{ There is no structural break in the conditional mean}, \\
        H_1: & \;  \text{ There is a structural break in the conditional mean}.
    \end{split}
\end{equation}

In other words, the null hypothesis $H_0$ corresponds to the assumption $\mu_1(x) = \mu_2(x)$ for all $x\in\mathcal{X}$ whereas the alternative hypothesis $H_1$ points to the inequality of the two functions for at least one $x$. As mentioned before, we can use the statistic given in \eqref{eq:sup-mu1-mu2}, and use its null distribution to detect significant deviation from $H_0$. The following theorem provides the large sample behavior of the test statistic under the null hypothesis when the conditional variance function stays the same throughout the time domain $\T$.

\begin{theorem}
\label{lem:theorem3}
    Along with the previously stated assumptions in \Cref{sec:methods}, assume that the conditional variance function remains the same throughout the time domain $\T$, i.e., $\sigma_1^2(x)=\sigma_2^2(x)=\sigma^2(x)$ for all $x$.  Define 
     \begin{equation*}
         \begin{split}
             \mathcal{B}_{r}(p)=\sqrt{2\log p}-\frac{1}{\sqrt{2\log p}} \bigg[ &\log\log(p) + \\
             &\log(2\sqrt{\pi})\bigg]+\frac{p}{\sqrt{2\log r}}.
         \end{split}
     \end{equation*}
     Considering the bandwidth condition
     \begin{equation*}
         \label{eq:bw_condition_th1}
          nb_n^9\log n+\frac{(\log n)^3}{nb_n^3}+\Xi_{n}\left\{\frac{b_n^3\log n}{n}+\frac{(\log n)^2}{n^{2} b_n^{\frac{4}{3}}}\right\}\xrightarrow{n\rightarrow\infty}0,
     \end{equation*}
     under the null hypothesis mentioned in \eqref{eq:mu-exist-hypothesis}, for any $z\in\mathbb{R}$,
     \begingroup
     \footnotesize
     \begin{equation*}
     \label{eq:theorem1}
     \begin{split}
         \lim_{n\rightarrow\infty} & P\left[\frac{\sqrt{nb_n}}{\sqrt{\phi(K^*)}}\sup_{x\in \Pi_n}\left\{\left(\frac{\sqrt{\widehat f_X(x)}}{{\sqrt{\widehat\sigma^2(x)}}}\right)\abs{\widehat\mu^*_{\text{diff}}(x)}\right\}\leqslant \mathcal{B}_{m_{n}}(z)\right] =e^{-2e^{-z}},
     \end{split}
     \end{equation*}
     \endgroup
where $\widehat\sigma^2(x)$ is a nonparametric estimate of conditional variance.
\end{theorem}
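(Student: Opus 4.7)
The plan is to reduce $\sup_{x\in\Pi_n}\bigl|T_n(x)\bigr|$, where $T_n(x):=\sqrt{nb_n\widehat f_X(x)/(\phi(K^*)\widehat\sigma^2(x))}\,\widehat\mu^*_{\text{diff}}(x)$, to the classical extreme-value problem for a nearly Gaussian array over a grid of cardinality $m_n\asymp b_n^{-1}$. Under $H_0$, Corollary 1 already yields the point-wise limit $T_n(x)\Rightarrow\gauss(0,c)$ for an explicit constant $c$; the task is to lift this uniformly on $\Pi_n$. I would follow the standard three-step programme: (i) replace the random normalizers $\widehat f_X$ and $\widehat\sigma^2$ by their deterministic counterparts $f_X$ and $\sigma^2$ uniformly on $\Pi_n$; (ii) couple $\{T_n(x):x\in\Pi_n\}$ with a Gaussian vector $\{G_n(x):x\in\Pi_n\}$ having the correct covariance; (iii) compute the Gaussian maximum via a Berman/Slepian comparison and pass to the classical Gumbel law.

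For (i), uniform consistency of the kernel density and of the conditional variance estimator over $\mathcal{X}$ at a rate faster than $(\log m_n)^{-1/2}$ follows from standard kernel arguments once the temporal dependence in $\{X_t\}$ is controlled through $\Xi_n$; the contribution of this replacement to the supremum is then negligible at the $(\log m_n)^{-1/2}$ scale and does not affect the extreme-value limit. For (ii), I would exploit the martingale-difference representation of $\widehat\mu^*_{\text{diff}}(x)$ already used in the proof of Theorem 1, but strengthen the point-wise martingale CLT into a joint strong approximation over the whole grid, via e.g.\ a Skorokhod or KMT-type embedding for martingale arrays. The bandwidth condition $nb_n^9\log n+(\log n)^3/(nb_n^3)+\Xi_n\{b_n^3\log n/n+(\log n)^2/(n^{2}b_n^{4/3})\}\to 0$ is designed precisely so that the coupling error is $o_P((\log m_n)^{-1/2})$ even under long-range dependence of $\{X_t\}$, and so that the residual kernel bias, which is of order $O(b_n^4)$ after the jackknife correction with $K^*$, is $o((nb_n\log n)^{-1/2})$ and therefore invisible to the extreme-value normalization.

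For (iii), the covariance between $G_n(x)$ and $G_n(x')$ reduces to $\phi(K^*)^{-1}\int K^*(u)K^*(u+(x'-x)/b_n)\,du$. Because the grid spacing is $2b_n$ and $\text{supp}(K^*)\subseteq[-\sqrt{2},\sqrt{2}]$, only immediate neighbors contribute non-vanishing covariance, and that covariance is bounded strictly below $1$ uniformly in $n$. A Berman-type inequality then implies that $\max_{x\in\Pi_n}|G_n(x)|$ has the same extreme-value limit as the maximum of $m_n$ i.i.d.\ $|\gauss(0,1)|$ variables. Taking absolute values makes both tails contribute, which is where the factor $2$ in the exponent of $e^{-2e^{-z}}$ comes from, and the classical centering for the maximum of $m_n$ i.i.d.\ standard normals matches the sequence $\mathcal{B}_{m_n}(z)$ displayed in the theorem after elementary algebra.

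The main obstacle will be step (ii). A point-wise martingale CLT is too crude, and the required uniform coupling error $(\log m_n)^{-1/2}$ must survive simultaneously the potentially long-range behavior of $\{X_t\}$, tracked through $\Xi_n$, and the $m_n\asymp b_n^{-1}$ grid size. Carrying out the chaining with moment bounds sharp enough to match the given bandwidth hypothesis is where the bulk of the technical work should lie; once the Gaussian approximation is in hand, the covariance computation and the appeal to classical Gaussian extreme-value theory in step (iii) are essentially routine.
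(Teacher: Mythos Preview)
Your three–step programme is a legitimate route, but it is not the one the paper takes, and the difference is substantive. You propose a Gaussian strong approximation (KMT/Skorokhod) followed by a Berman comparison and classical Gaussian extreme–value theory. The paper instead bypasses Gaussian coupling entirely: it writes the centred statistic at any $k$ grid points of $\Pi_n$ as a vector–valued martingale $\mathbf{S}_{n,k}$, verifies that its quadratic characteristic matrix $\mathcal{Q}$ is close to the identity (the grid spacing $2b_n$ makes the off–diagonal blocks vanish, and the diagonal entries are controlled via the same $\Xi_n$–type bounds used in Theorem~1), bounds the conditional third moments, and then applies the moderate–deviation theorem of Grama and Haeusler (2006) for martingale arrays. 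That theorem yields directly
\[
P\!\left(\bigcap_{r=1}^k\bigl\{|S_n(x_{t_{j_r}})|\ge\mathcal{B}_{m_n}(z)\bigr\}\right)=\left(\frac{2e^{-z}}{m_n}\right)^{k}(1+o(1)),
\]
uniformly in the choice of grid points, after which inclusion–exclusion gives the Gumbel limit. No intermediate Gaussian process is ever constructed.

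The practical difference is that the Grama–Haeusler route packages your step~(ii) and step~(iii) into a single moderate–deviation estimate whose input is exactly the two quantities $E|\mathcal{Q}_{r,r'}-\mathcal{I}_{r,r'}|^{3/2}$ and $\sum_t E|\upsilon_t|^3$; the bandwidth condition in the theorem is tailored so that the sum of these is $o\bigl((\log m_n)^{-1}m_n^{-1}\bigr)$ after multiplication by $(1+\mathcal{B}_{m_n}(z))^4e^{\mathcal{B}_{m_n}^2(z)/2}$. Your strong–approximation plan would instead have to produce a coupling with error $o_P((\log m_n)^{-1/2})$ uniformly over $m_n$ coordinates under possible long–range dependence in $\{X_t\}$, which, as you yourself note, is the hard part and is not obviously available off the shelf. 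So while nothing in your outline is wrong in principle, the paper's argument is shorter and sidesteps precisely the step you flag as the main obstacle.
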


Following \Cref{lem:theorem3}, the test for a structural break in the conditional mean can be implemented. We first construct the discrete grid of points $\Pi_n$ and evaluate the test statistic
\begin{equation*}
    T(\mu)=\frac{\sqrt{nb_n}}{\sqrt{\phi(K^*)}}\max_{x\in\Pi_n}\left\{\frac{\sqrt{\widehat f_X(x)}}{\sqrt{\widehat\sigma^2(x)}}\abs{\widehat\mu_1^*(x)-\widehat\mu_2^*(x)}\right\}.
\end{equation*}

Then, we reject the null hypothesis of structural stability in conditional mean at level of significance $\eta$ if the observed value of the statistic $T(\mu)$ exceeds the critical value $\mathcal{B}_{m_n}(z_\eta)$, $z_\eta = -\log(-2\log(1-\eta))$ being the $100(1-\eta)\%$ quantile of standard Gumbel distribution.

Along a similar line, the testing problem \eqref{eq:what_to_test_sigma} for the structural stability in the conditional variance function $\sigma^2(\cdot)$ is formulated as
\begin{equation}\label{eq:sigma-exist-hypothesis}
    \begin{split}
        H_0: & \;  \text{ There is no structural break in the conditional variance}, \\
        H_1: & \;  \text{ There is a structural break in the conditional variance}.
    \end{split}
\end{equation}

It can be tested with the statistic
\begin{equation*}
\label{eq:sigmacp}
    \sup_{x\in\mathcal{X}}\left\{\abs{\widehat{\sigma}^{*^2}_{\text{diff}}(x)}\right\}=\sup_{x\in\mathcal{X}}\left\{\abs{\widehat\sigma_1^{*^2}(x)-\widehat\sigma_2^{*^2}(x)}\right\},
\end{equation*}
which, in practice, can be approximated over a dense grid of values of $x$ as defined in $\Pi_n$ before, i.e., we use $\max_{x\in\Pi_n}\left\{\abs{\widehat\sigma_1^{*^2}(x)-\widehat\sigma_1^{*^2}(x)}\right\}$ in practical applications. The below-stated theorem illustrates the behavior of the supremum statistic for the conditional variance  under the null hypothesis of structural stability. 

\begin{theorem}
\label{lem:theorem4}
Along with the previously stated assumptions in \Cref{sec:methods}, let 
\begin{equation*}
    \label{eq:bw_condition_th3}
     nb_n^9\log n+\frac{\log n}{nb_n^4}+\Xi_{n}\left\{\frac{b_n^3\log n}{n}+\frac{(\log n)^2}{n^{2}b_n^{\frac{4}{3}}}\right\}\xrightarrow{n\rightarrow\infty}0.
\end{equation*}
Under the null hypothesis of \eqref{eq:sigma-exist-hypothesis}, for any $z\in\mathbb{R}$,
    \begingroup
     \footnotesize
     \begin{equation*}
    \label{eq:eqth3}
    \begin{split}
        \lim_{n\rightarrow\infty} & P\left[\frac{\sqrt{nb_n}}{\widehat\nu_{\epsilon}\sqrt{\phi(K^*)}}\sup_{x\in\Pi_n}\left\{\sqrt{\frac{\widehat f_X(x)}{\widehat S(x)}}\abs{\widehat\sigma^{*^2}_{\text{diff}}(x)}\right\}\leqslant \mathcal{B}_{m_{n}}(z)\right] \\
        & = e^{-2e^{-z}},
    \end{split}
    \end{equation*}
    \endgroup
    where $\mathcal{B}_{m_n}(z)$ is as defined as \Cref{lem:theorem3}, $\widehat S(x)$ and $\widehat\nu_\epsilon$ are as defined in \Cref{sec:point-wise_asymptotics}.
\end{theorem}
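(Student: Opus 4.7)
The plan is to establish \Cref{lem:theorem4} by mirroring the three-step strategy underlying \Cref{lem:theorem3}, adapted to the squared-residual form defining the variance estimator: first linearize $\widehat\sigma^{*^2}_{\text{diff}}(x)$ around its dominant stochastic part; then establish joint asymptotic normality over the partition $\Pi_n$; and finally extract the Gumbel limit from classical extreme-value theory for maxima of nearly independent Gaussians.

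For the linearization step, I would start from the decomposition
\begin{equation*}
    (Y_t - \widehat\mu_i^*(X_t))^2 = \sigma^2(X_t)\epsilon_t^2 + 2\sigma(X_t)\epsilon_t\{\mu_i(X_t) - \widehat\mu_i^*(X_t)\} + \{\mu_i(X_t) - \widehat\mu_i^*(X_t)\}^2,
\end{equation*}
for $i=1,2$. By \Cref{lem:corollary1}, $\widehat\mu_i^*(X_t) - \mu_i(X_t) = O_P((nb_n)^{-1/2})$ pointwise, with logarithmic inflation when promoted uniformly over $\Pi_n$, so the cross and squared terms are negligible after multiplication by the $\sqrt{nb_n}$ normalization, precisely when $(\log n)/(nb_n^4)\to 0$ as ensured by the stated bandwidth condition. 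Substituting into $\widehat\sigma^{*^2}_{\text{diff}}(x)$, using $\sigma_1=\sigma_2=\sigma$ under $H_0$, and collecting terms shows the leading contribution to be a centered linear functional of $\{\sigma^2(X_t)(\epsilon_t^2 - 1)K^*((x - X_t)/b_n)\}_{t\in\tminus}$ minus the analogous sum over $\tplus$, which is exactly the structure whose CLT was derived in \Cref{lem:theorem2}.

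For joint normality on a finite subset of grid points, I would invoke the same martingale-CLT argument used for \Cref{lem:theorem2} (with respect to the filtration $\{\mathcal{G}_t\}$) together with the Cramer–Wold device. Because $K^*$ has bounded support $[-\sqrt{2},\sqrt{2}]$ and $\Pi_n$ has spacing $2b_n$, non-adjacent grid evaluations involve essentially disjoint summands and are thus asymptotically uncorrelated; adjacent overlap is of order $b_n$ and can be absorbed into a Berman-type condition that does not disturb the Gumbel limit. Promoting this to a maximum over all $m_n=O(b_n^{-1})$ points requires a uniform Gaussian coupling in the spirit of \cite{wu2007inference}; the $\Xi_n$-terms in the bandwidth condition are exactly what drive the coupling error to zero under long-range dependence. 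Conditional on this coupling, the classical extreme-value identity for near-independent $N(0,1)$ variables $Z_1,\ldots,Z_{m_n}$, namely $P(\max_j|Z_j|\le \mathcal{B}_{m_n}(z))\to \exp(-2e^{-z})$ (the factor $2$ arising from the absolute value), yields the stated limit. The consistency of $\widehat\nu_\epsilon$ for $\nu_\epsilon$ (remarked after \Cref{lem:theorem2}) combined with Slutsky's theorem handles the data-driven normalization.

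The main obstacle I anticipate is the uniform control of the plug-in-bias remainder over the diverging grid $\Pi_n$: the mean estimators $\widehat\mu_i^*$ are themselves kernel-smoothed and must be bounded uniformly across $\Pi_n$ and separately within $\tminus$ and $\tplus$, without leaning on any global stability of $\mu$. Verifying the associated maximal inequalities simultaneously with the $\Xi_n$-dependent coupling bound, so that all remainders are uniformly of order $o_P(1/\sqrt{\log n})$ (which is what the Gumbel scale demands), is where the most delicate calculation lies; the term $\Xi_n(\log n)^2/(n^2 b_n^{4/3})\to 0$ in the bandwidth hypothesis is tuned precisely for this purpose.
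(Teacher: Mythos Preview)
Your linearization step is essentially the paper's: the proof of \Cref{lem:theorem2} carries out exactly the decomposition you describe, reducing $\widehat\sigma_i^{*^2}(x)-\sigma_i^2(x)$ (after controlling the plug-in remainders $L_{\mathcal T_\pm}$, $J_{\mathcal T_\pm}$ and the Taylor bias $D_{\mathcal T_\pm}$) to the centered sums $E_{\mathcal T_\pm}^*(x)=\sum\sigma^2(X_t)(\epsilon_t^2-1)K^*_{b_n}(x-X_t)$. Under $H_0$ this puts $\widehat\sigma^{*^2}_{\text{diff}}(x)$ in precisely the form handled by \Cref{lem:lemma3} with $f_1=\sigma^2$ and $f_2(\epsilon)=\epsilon^2$, so that $V(f_2(\epsilon_0))=\nu_\epsilon$.

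The route then diverges. The paper does \emph{not} pass through a Gaussian coupling followed by classical extreme-value theory. Instead, \Cref{lem:lemma3} computes the quadratic-characteristic matrix of the martingale vector $(S_n(x_{j_1}),\ldots,S_n(x_{j_k}))$ for an arbitrary $k$-subset of $\Pi_n$, verifies it is close to the identity, and applies the martingale moderate-deviation theorem of \cite{grama2006asymptotic} directly to obtain
\[
P\Bigl(\bigcap_{r=1}^k\{|S_n(x_{j_r})|\ge \mathcal B_{m_n}(z)\}\Bigr)=\left(\frac{2e^{-z}}{m_n}\right)^k(1+o(1));
\]
the Gumbel limit then follows by inclusion--exclusion. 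This bypasses both the Cram\'er--Wold device and any strong Gaussian approximation: the tail accuracy required for a maximum over $m_n\to\infty$ points comes for free from Grama--Haeusler, whose error bound is exactly what the bandwidth condition is calibrated to kill. Your coupling route is a legitimate alternative and would ultimately work, but it requires constructing and controlling a separate Gaussian approximation whose rate beats the $\sqrt{\log m_n}$ scale---more machinery than the paper invokes. A minor correction: your remark that ``adjacent overlap is of order $b_n$'' understates things; with $K^*$ supported on $[-\sqrt{2},\sqrt{2}]$ and grid spacing $2b_n$, adjacent grid evaluations carry a fixed $O(1)$ correlation, not $o(1)$. It is only the correlations at lags $\ge 2$ that vanish, which is still sufficient for the Berman-type reasoning you invoke, but the statement as written is inaccurate.
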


Following \Cref{lem:theorem4}, the test for a structural break in the conditional variance can be implemented similarly as before. Upon constructing the discrete grid of points $\Pi_n$, we evaluate 
\begin{equation*}
    T(\sigma)=\frac{\sqrt{nb_n}}{\widehat\nu_\epsilon\sqrt{\phi(K^*)}}
    \max_{x\in\Pi_n}\left\{\frac{\sqrt{\widehat f_X(x)}}{\sqrt{\widehat S(x)}}\abs{\widehat\sigma_1^{*^2}(x)-\widehat\sigma_2^{*^2}(x)}\right\}.
\end{equation*}
and the decision is to reject the null hypothesis of structural stability in conditional variance at level of significance $\eta$ if the observed value of $T(\sigma)$ is bigger than the critical value $\mathcal{B}_{m_n}(z_\eta)$. 

Note that the key steps to proving the above theorems is to express $\widehat\mu^*_{\text{diff}}(x)$ as a martingale difference sequence and considering its quadratic characteristic matrix. A martingale maximum deviation theorem proposed by \cite{grama2006asymptotic} is then applied to obtain the asymptotic distribution. The technical details of the proof are deferred to Section S2 of the supplementary material. 

Let us now move on to the test of a structural break in both the conditional mean and the conditional variance function. This can be formulated as the following testing problem:
\begin{equation}\label{eq:mu-sigma-exist-hypothesis}
    \begin{split}
        H_0: & \;  \text{ There is no structural break in $\mu(\cdot)$ or in $\sigma^2(\cdot)$}, \\
        H_1: & \;  \text{ There is a structural break either in $\mu(\cdot)$ or in $\sigma^2(\cdot)$} \text{ or in both}.
    \end{split}
\end{equation}

We suggest testing \eqref{eq:mu-sigma-exist-hypothesis} by simultaneously utilizing Theorems \ref{lem:theorem3} and \ref{lem:theorem4}, along with a Holm-Bonferroni correction. In this approach, define $T_{\text{max}}=\max\{\abs{T(\mu)},\abs{T(\sigma)}\}$ and $T_{\text{min}}=\min\{\abs{T(\mu)},\abs{T(\sigma)}\}$. Then, both the null hypotheses in \eqref{eq:mu-exist-hypothesis} and \eqref{eq:sigma-exist-hypothesis} are rejected at $\eta$ level of significance if $T_{\text{min}}\geqslant \mathcal{B}_{m_n}(z_{\frac{\eta}{2}})$ and $T_{\text{max}}\geqslant \mathcal{B}_{m_n}(z_{\eta})$. The performance of this test is illustrated in \Cref{sec:simulation}.

As a last point of discussion in this section, we want to highlight that the above results also facilitate the construction of $100(1-\eta)\%$ simultaneous confidence bands for the differences in the conditional mean function or the same in the conditional variance function in the two halves of the data. These confidence bands are quite effective in obtaining valuable insights about how the covariate process impacts the response variable before and after a potential structural break. We state the expressions for these confidence bands below. The derivations of these bands are straightforward from the previous theorems.

\begin{corollary}
    \label{lem:corollary4}
    Under the conditions stated in \Cref{lem:theorem3}, the confidence band for $\mu_{\text{diff}}(x)$ is 
        \begin{equation*}
\label{eq:cb_mean}
    \widehat{\mu}_{\text{diff}}^*(x) \pm \frac{\sqrt{\phi(K^*) \widehat{\sigma}^2(x)}}{\sqrt{nb_n \widehat{f}_X(x)}} \mathcal{B}_{m_n}(z_\eta);
\end{equation*}
and under the conditions stated in \Cref{lem:theorem4}, the confidence band for $\sigma^{^2}_{\text{diff}}(x)$ is 
\begin{equation*}
\label{eq:sigma-cb}
    \widehat\sigma^{*^2}_{\text{diff}}(x)\pm \frac{\widehat\nu_\epsilon\sqrt{\phi(K^*)\widehat f_X(x)}}{\sqrt{nb_n\widehat S(x)}}\mathcal{B}_{m_n}(z_\eta).
    \end{equation*}
\end{corollary}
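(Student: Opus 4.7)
The plan is to derive both confidence bands by inverting the limiting Gumbel distributions of Theorems~\ref{lem:theorem3} and~\ref{lem:theorem4}, after observing that these theorems actually describe the supremum deviation of the centered processes $\widehat{\mu}^*_{\text{diff}}(x)-\mu_{\text{diff}}(x)$ and $\widehat{\sigma}^{*^2}_{\text{diff}}(x)-\sigma^2_{\text{diff}}(x)$, rather than of the raw statistics $\widehat{\mu}^*_{\text{diff}}(x)$ and $\widehat{\sigma}^{*^2}_{\text{diff}}(x)$. The reason the theorems are phrased under $H_0$ is only that $H_0$ makes the centering vanish, so the theorem statements can be written without a subtraction. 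The proofs themselves rest on a martingale-difference decomposition for $\widehat{\mu}_1^* - \widehat{\mu}_2^*$ (and its variance analogue), the computation of its quadratic characteristic, the removal of the leading bias via the jackknife kernel $K^{*}$, and a final application of the Grama--Haeusler maximum deviation theorem; none of these steps use $\mu_1\equiv\mu_2$ except to set the deterministic centering to zero.

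First, I would revisit the proof of Theorem~\ref{lem:theorem3} and track the decomposition of $\widehat{\mu}^*_{\text{diff}}(x)-\mu_{\text{diff}}(x)$ into a leading martingale term plus a bias term of order $O(b_n^4)$ (after jackknife correction, exactly as in Corollary~\ref{lem:corollary1}) and negligible remainders. Under the bandwidth condition of Theorem~\ref{lem:theorem3}, the $O(b_n^4)$ bias is $o\bigl((nb_n)^{-1/2}\sqrt{\log n}\bigr)$ uniformly in $x\in\Pi_n$, so it does not enter the limit. Since the asymptotic variance is governed entirely by $\phi(K^*)$, $\sigma^2(x)$ and $f_X(x)$, and depends on neither $\mu_1$ nor $\mu_2$, the same Grama--Haeusler argument yields
\[
\lim_{n\to\infty} P\!\left[\frac{\sqrt{nb_n}}{\sqrt{\phi(K^*)}}\sup_{x\in\Pi_n}\left\{\frac{\sqrt{\widehat{f}_X(x)}}{\sqrt{\widehat{\sigma}^2(x)}}\,\bigl|\widehat{\mu}^*_{\text{diff}}(x)-\mu_{\text{diff}}(x)\bigr|\right\}\le \mathcal{B}_{m_n}(z)\right] = e^{-2e^{-z}}.
\]
An entirely parallel modification of the proof of Theorem~\ref{lem:theorem4}, using the fourth-moment factor $\widehat{\nu}_\epsilon$ and the sum-of-variances scaling $\widehat{S}(x)$, produces the analogous statement for $\widehat{\sigma}^{*^2}_{\text{diff}}(x)-\sigma^2_{\text{diff}}(x)$.

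Choosing $z=z_\eta$ so that $e^{-2e^{-z_\eta}}=1-\eta$, the event inside the probability translates, simultaneously for every $x\in\Pi_n$, into the two-sided inequality
\[
\widehat{\mu}^*_{\text{diff}}(x) - \frac{\sqrt{\phi(K^*)\widehat{\sigma}^2(x)}}{\sqrt{nb_n\widehat{f}_X(x)}}\mathcal{B}_{m_n}(z_\eta) \le \mu_{\text{diff}}(x) \le \widehat{\mu}^*_{\text{diff}}(x) + \frac{\sqrt{\phi(K^*)\widehat{\sigma}^2(x)}}{\sqrt{nb_n\widehat{f}_X(x)}}\mathcal{B}_{m_n}(z_\eta),
\]
which is the claimed band for $\mu_{\text{diff}}(x)$. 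Applying the inversion to the centered variance statistic produces the second band, $\widehat{\sigma}^{*^2}_{\text{diff}}(x)\pm \widehat{\nu}_\epsilon \sqrt{\phi(K^*)\widehat{f}_X(x)/(nb_n\widehat{S}(x))}\,\mathcal{B}_{m_n}(z_\eta)$. Passing from the grid $\Pi_n$ to the entire interval $\mathcal{X}$ uses the smoothness of $\mu$, $\sigma$, $f_X$ from Assumption~\ref{assumption_functions} together with $b_n\to 0$, in the same way this passage was justified for the test statistics in Section~\ref{sec:supremum_asymptotics}.

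The main technical obstacle I anticipate is not in the inversion, which is mechanical, but in verifying that the uniform-in-$x$ error terms in the proofs of Theorems~\ref{lem:theorem3}--\ref{lem:theorem4} remain $o_P(1/\sqrt{\log n})$ when $\mu_{\text{diff}}$ and $\sigma^2_{\text{diff}}$ are nonzero. This requires checking that the remainders depend only on smoothness bounds of $\mu_1,\mu_2,\sigma_1,\sigma_2$ and the dependence quantity $\Xi_n$, all of which are controlled by the assumptions already in force; once this bookkeeping is done, the corollary follows without further work.
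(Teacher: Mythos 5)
Your proposal is correct and takes essentially the same route as the paper, which gives no explicit proof and simply declares the bands "straightforward from the previous theorems": the intended argument is exactly your observation that the Gumbel limits in Theorems~\ref{lem:theorem3} and~\ref{lem:theorem4} really concern the centered suprema (the null only zeroes out the centering), followed by inversion at $z_\eta$ and the grid-to-interval passage. Your added bookkeeping on the uniform bias terms under nonzero $\mu_{\text{diff}}$ and $\sigma^2_{\text{diff}}$ is a point the paper leaves implicit, but it does not change the approach.
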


\subsection{Multiple structural break detection and localization algorithm}\label{subsec:detection-of-structural-breaks}

While the AMOC framework offers a solid foundation for detecting a single structural break, many real-world processes involve multiple changes at unknown locations. Extending AMOC to handle multiple changepoints is a natural and necessary step to capture such complexity. This is typically achieved by embedding the AMOC test within a sequential segmentation framework, such as the widely used binary segmentation algorithm. Binary segmentation iteratively detects the most significant changepoint in a segment, splits the data at that point, and recursively applies the procedure to the resulting sub-segments. This approach leverages the local detection power of the AMOC test while enabling multiple changepoint detection across the dataset.

The process starts by applying the AMOC test to the full dataset. If no changepoint is detected, the procedure stops, indicating homogeneity. If a changepoint is found, the data is split at the estimated location, and the AMOC test is recursively applied to each sub-segment until no further breaks are detected or segments fall below the user-defined minimum length $L_{\text{min}}$. Since binary segmentation can lead to over-detection or mislocalization, especially with closely spaced changes, a post-detection validation step is used to refine the results and control false positives. Each detected changepoint $b_i$ is re-tested using local sub-data $\{(Y_t, X_t)\}, \; b_{i-1} \leqslant t \leqslant b_{i+1}$ to confirm its significance. Let $\widehat{\mu}_{b_i^-}(\cdot), \widehat{\mu}_{b_i^+}(\cdot)$, and $\widehat{\sigma}^2_{b_i^-}(\cdot), \widehat{\sigma}^2_{b_i^+}(\cdot)$ denote the conditional mean and variance estimates in the two segments, obtained using the kernel estimators from \Cref{sec:point-wise_asymptotics}. The magnitude of the break is assessed by the statistics
\begin{gather}
\label{eq:alternate_teststat}
        \widehat{\mu}_{\text{cp}}(b_i) = \sup_{x \in \mathcal{X}} \left\{ \left|\widehat\mu_{b_i^-}(x) - \widehat{\mu}_{b_i^+}(x)\right| \right\},\quad \widehat{\sigma}^2_{\text{cp}}(b_i) = \sup_{x \in \mathcal{X}} \left\{ \left|\widehat{\sigma}_{b_i^-}^2(x) - \widehat{\sigma}_{b_i^+}^2(x)\right| \right\},
\end{gather}
and we declare $b_i$ to be indeed a structural break if either $\widehat{\mu}_{\text{cp}}(b_i)$ or $\widehat{\sigma}^2_{\text{cp}}(b_i)$ is statistically significantly large. The asymptotic properties of $\widehat{\mu}_{\text{cp}}(b_i)$ and $\widehat{\sigma}^2_{\text{cp}}(b_i)$ follow similarly to those outlined in \Cref{sec:supremum_asymptotics}, and the corresponding tests can be designed in the same fashion. This localization step can help fine-tune the exact position of the changepoint, improving both the precision and the reliability of the detection.

At this stage, it is important to emphasize that an alternative approach is to implement the second stage directly to detect structural break, i.e. to split the data at a random time point $t_0$ and assessing the disparity between the functional characteristics of the data before and after $t_0$ using test statistics $\widehat\mu_{\text{cp}}(t_0)$ or $\widehat\sigma^2_{\text{cp}}(t_0)$ of the form \eqref{eq:alternate_teststat}. Assuming the data contains at most one structural break, the test statistics can then be inverted to detect the location of this break. For practical implementation, we define a sequence of partitions of $\T$ as $\{\T_n\}$, where $\T_n = \{t_1', t_2', \ldots \mid t_j' = 2j b_n, j = 0, 1, 2, \ldots, k_n - 1\}$ with $k_n = \lceil 1/2b_n \rceil$. A structural break in the conditional mean or in the conditional variance can then be estimated as
\begin{equation*}
    \widehat{\tau_0}^{\mu} = \argmax_{t \in \T_n} \left\{ \widehat{\mu}_{\text{cp}}(t) \right\}, \; \; \widehat{\tau_0}^{\sigma} = \argmax_{t \in \T_n} \left\{ \widehat{\sigma}^2_{\text{cp}}(t) \right\}.
\end{equation*}

The consistency of these estimated breaks is guaranteed by the following result.

\begin{theorem}
\label{lem:theorem6}
Let $\tau_0^\mu$ and $\tau_0^\sigma$ denote the true structural breaks in the conditional mean and the conditional variance, respectively. Then, as $n\to\infty$, $\widehat{\tau_0}^{\mu} \xrightarrow{P} \tau_0^{\mu}$ and $\widehat{\tau_0}^{\sigma} \xrightarrow{P} \tau_0^{\sigma}$.
\end{theorem}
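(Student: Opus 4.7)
The plan is to establish the consistency of $\widehat{\tau_0}^\mu$ (and analogously $\widehat{\tau_0}^\sigma$) through a standard argmax-consistency argument for M-estimators: first, identify the population criterion that $\widehat{\mu}_{\text{cp}}(t)$ approximates; second, verify that this criterion is maximized uniquely at $\tau_0^\mu$; and third, invoke uniform convergence of $\widehat\mu_{\text{cp}}(\cdot)$ to that criterion on the grid $\T_n$, which becomes dense in $\T$ as $b_n\to 0$.

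For the first step, fix $t\in(0,1)$ and consider the two subsamples $\{(Y_i,X_i): t_i \le t\}$ and $\{(Y_i,X_i): t_i > t\}$. If $t<\tau_0^\mu$, the left subsample is drawn entirely from the pre-break regime, and by the consistency underlying \Cref{lem:theorem1} we get $\widehat\mu_{t^-}(x)\xrightarrow{P}\mu_1(x)$. The right subsample is a mixture: a fraction $\lambda_t = (\tau_0^\mu-t)/(1-t)$ of its observations follows $\mu_1$ and a fraction $1-\lambda_t$ follows $\mu_2$. Decomposing the Nadaraya--Watson numerator across the two regimes and using the stationarity of $\{X_t\}$ from \Cref{Xtdependence} yields
\begin{equation*}
    \widehat\mu_{t^+}(x)\xrightarrow{P} \lambda_t\mu_1(x)+(1-\lambda_t)\mu_2(x),
\end{equation*}
so that $|\widehat\mu_{t^-}(x)-\widehat\mu_{t^+}(x)|\xrightarrow{P}(1-\lambda_t)|\mu_1(x)-\mu_2(x)|$. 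Repeating the analysis for $t\ge\tau_0^\mu$ and taking the supremum in $x$ yields the limit criterion
\begin{equation*}
    M_\mu(t) = \begin{cases} \dfrac{1-\tau_0^\mu}{1-t}\,\Delta_\mu & \text{if } t\le \tau_0^\mu, \\[2mm] \dfrac{\tau_0^\mu}{t}\,\Delta_\mu & \text{if } t\ge \tau_0^\mu, \end{cases} \qquad \Delta_\mu \coloneqq \sup_{x\in\mathcal{X}}|\mu_1(x)-\mu_2(x)|>0.
\end{equation*}
This function is continuous, strictly increasing on $[0,\tau_0^\mu]$, strictly decreasing on $[\tau_0^\mu,1]$, and attains its unique maximum $\Delta_\mu$ at $t=\tau_0^\mu$; the break is therefore a well-separated maximizer.

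Next I would upgrade the pointwise convergence to a uniform statement of the form
\begin{equation*}
    \sup_{t\in\T_n\cap[\delta,1-\delta]}|\widehat\mu_{\text{cp}}(t)-M_\mu(t)|\xrightarrow{P}0,
\end{equation*}
for any fixed $\delta>0$. Uniformity in $x\in\mathcal{X}$ for each fixed $t$ follows from the same martingale maximal inequalities used in the proof of \Cref{lem:theorem3}, applied separately to each subsample; uniformity in $t$ is obtained by a stochastic-equicontinuity argument, noting that $t\mapsto\widehat\mu_{t^{\pm}}(x)$ changes by an $O_P((nb_n)^{-1})$ term when $t$ moves by one sampling point, so a polynomially-fine grid of $t$-values combined with a union bound over the $O(m_n)$ grid points suffices. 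The boundary strips $[0,\delta)\cup(1-\delta,1]$ are harmless because $M_\mu$ is bounded away from $\Delta_\mu$ there whenever $\tau_0^\mu$ is interior, so for $n$ large the argmax cannot fall into them. The well-separated maximum together with the density of $\T_n$ in $\T$ then gives, by the classical argmax continuous mapping theorem, $\widehat{\tau_0}^\mu\xrightarrow{P}\tau_0^\mu$. An identical argument, with $\mu_j$ replaced by $\sigma_j^2$ and the consistency of Nadaraya--Watson variance estimators (from \Cref{lem:theorem2} and the consistency of $\widehat\mu_j^*$) in place of the mean-function consistency, yields $\widehat{\tau_0}^\sigma\xrightarrow{P}\tau_0^\sigma$.

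The main obstacle is controlling the uniform-in-$(t,x)$ behavior under the potentially long-range-dependent covariate allowed by \Cref{Xtdependence}: the bandwidth conditions used in the earlier theorems must be checked to carry through when the split point $t$ varies, because the effective sample size in each subsegment is a fraction of $n$ that depends on $t$. Provided $t$ is kept in $[\delta,1-\delta]$ both fractions are $\Theta(n)$, so the bandwidth conditions of \Cref{lem:theorem1,lem:theorem2} hold uniformly in $t$; this is the key technical point that must be verified, but it does not require any machinery beyond that already developed in \Cref{sec:point-wise_asymptotics,sec:supremum_asymptotics}.
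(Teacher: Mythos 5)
Your proposal is correct, but it takes a genuinely different route from the paper. The paper's own proof is much shorter: it asserts pointwise consistency of the kernel estimate $\widehat{\widetilde\mu}(x\mid t_0)$ for each fixed split point, passes the supremum over $x$ through via Berge's maximum theorem applied to a compact-valued correspondence $C:\T\rightarrow\mathbf{C}(\mathcal{X})$, and then concludes by invoking ``the uniqueness of the maximum of $\widehat{\widetilde\mu}$ in its second argument'' --- without ever computing what the population criterion equals at a misspecified split point. Your argument supplies exactly that missing identification: the mixture decomposition of the post-split (or pre-split) subsample gives the explicit tent-shaped limit $M_\mu(t)=\min\{(1-\tau_0^\mu)/(1-t),\,\tau_0^\mu/t\}\,\Delta_\mu$, whose strict monotonicity on either side of $\tau_0^\mu$ is precisely the well-separated-maximizer property that the paper's final sentence takes for granted. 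You also address uniformity of the convergence in $t$ (via the $O_P((nb_n)^{-1})$ increment bound and a union bound over the $O(1/b_n)$ split-point grid), which is needed for argmax consistency and which pointwise convergence plus Berge's theorem does not by itself deliver. In short, the paper's proof buys brevity at the cost of leaving the identification and uniformity steps implicit; your argmax-consistency route is longer but makes both explicit, at the price of the mild additional assumptions you correctly flag (a single interior break, and split points restricted to $[\delta,1-\delta]$ so that both subsample sizes remain $\Theta(n)$ and the bandwidth conditions of the earlier theorems hold uniformly).
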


The proof follows directly from an application of Berge's maximum theorem; and the details are provided in Section S2 of the supplementary material.

\section{Simulation study}
\label{sec:simulation}

We generate the covariate series $\{X_t\}$ from various data generating processes (DGPs) and obtain the response $\{Y_t\}$ using different conditional mean and variance functions under normal, moderate, and heavy-tailed noise. In the interest of space, the details of the DGPs, noise types, and the conditional structures across segments (determined by randomly introduced structural breaks) are provided in Section S3 of the supplementary material. The conditional mean $\mu(x)$ varies across segments: polynomial regression in segments 1 and 2, log-linear regression in segment 3 (common in econometrics), exponential regression in segment 4 (used in finance and biology), and trigonometric regression in segment 5 (suitable for modeling seasonality). For the variance function $\sigma^2(x)$, we assume homoskedasticity in segment 1, while segments 2–5 feature heteroskedasticity modeled by GARCH, power GARCH, and exponential GARCH structures, following established literature \citep{palm19967}. This setup allows us to assess both the power and localization accuracy of the proposed binary segmentation algorithm.

The nonparametric estimations are done using the parabolic kernel $K(u)=0.75\mathbf{I}_{\left\{\abs{u}\leqslant 1\right\}}$  and the MSE-optimal bandwidth of $b_n=n^{-0.2}$ where $n$ is the length of the data. One may alternatively select the bandwidth using cross-validation procedures, however, our observation has been that the performances of the test with the cross-validated bandwidth are more or less similar to those obtained with the MSE optimal bandwidth. We conduct the simulation study for sample sizes of $n=\left\{500,1000,2000\right\}$. All the simulations performed for evaluating the size and the power of the tests are conducted at 5\% level of significance and under the presence of a single structural break. The conditional mean and variance functions before and after the structural break are taken as mentioned in segment 5 and 2 respectively. To evaluate the performance of the tests, we simulate $1000$ independent samples from a chosen combination of a data generating process (DGP) and noise distribution. In each sample, a single structural break is introduced at a random time point. For each test, the size is measured as the  average empirical Type-I error rate, while the power is assessed by calculating the average proportion of correct rejections of the null hypothesis across all $1000$ samples.
  
\begin{table}[!ht]
\centering
\caption{Performance of the test when there is a single structural break in conditional mean.}
\label{tab:size_power_mu}
    \begin{tabular}{clcccccc}
    \toprule
     & & \multicolumn{2}{c}{Noise: $\gauss(0,1)$} & \multicolumn{2}{c}{Noise: $t_{10}$} & \multicolumn{2}{c}{Noise: Power law} \\
     Sample size & DGP & Size & Power & Size & Power & Size & Power \\
     \midrule
     500 & White Noise & 0.00 & 0.21 & 0.00 & 0.31 & 0.01 & 0.25 \\
     & ARMA-GARCH & 0.01 & 0.23 & 0.00 & 0.29 & 0.00 & 0.26 \\
     & TAR & 0.01 & 0.31 & 0.00 & 0.27 & 0.00 & 0.34 \\
     \midrule
     1000 & White Noise & 0.00 & 0.43 & 0.02 & 0.54 & 0.00 & 0.48\\
     & ARMA-GARCH & 0.00 & 0.37 & 0.00 & 0.51 & 0.00 & 0.48  \\
     & TAR & 0.03 & 0.47 & 0.01 & 0.59 & 0.01 & 0.57 \\ \midrule
     2000 & White Noise & 0.00 &0.53 & 0.01& 0.58 & 0.00 & 0.56\\
     & ARMA-GARCH & 0.04 & 0.55 & 0.02  & 0.58 &0.00  & 0.62\\
     & TAR & 0.01 & 0.58 &0.01 &0.65 &0.00 &0.67 \\
     \bottomrule
    \end{tabular}
\end{table}

For structural breaks in the conditional mean, as illustrated in \Cref{tab:size_power_mu}, the test shows good size control across all settings. The power of the test improves significantly with increasing sample size, starting with moderate power at a sample size of $500$ and reaching up to $67\%$ at a sample size of $2000$, with TAR models consistently demonstrating higher power compared to White Noise and ARMA-GARCH. The performance of the test of structural stability in conditional variance, as well as in both conditional mean and variance is deferred to the supplementary material. We find that when detecting breaks in the conditional variance, the test exhibits similarly good size control, with minimal deviations even at larger sample sizes. The power of the test in case of conditional variance is notably higher than for the corresponding test in mean, achieving relatively strong power even at a sample size of $500$ and further improving with sample size. Overall, the test is more effective at detecting shifts in the conditional variance function than the conditional mean, with the power increasing consistently across all DGPs as the sample size grows, especially for more complex models like TAR. On the other hand, the simultaneous test has around 50\% to 70\% power in smaller sample sizes, working comparatively well in the case of heavy-tailed noises. As the sample size increases the power of the test gets better across all combinations, consistently reaching values in the range of 80\% to 90\%. The performance of the localization is also presented in the supplement and will be helpful for the reader to understand the efficacy of the proposed algorithm.

\section{Application to Bitcoin data}
\label{sec:application}

News sentiment plays a significant role in shaping Bitcoin prices, much like it does with traditional financial assets. Several recent studies have explored public attention to Bitcoin can predict its price behavior \citep{figa2020disentangling}. In this section, we utilize our proposed method to analyze shifts in the relationship between public attention, measured by Google News Interest Score (GNIS) from Google Trends, and Bitcoin's price and volatility. The GNIS data, denoted as $\{G_t\}$, spans January 1 2020 to September 4 2024, assigning a relative score (0-100) to the search volume on ``Bitcoin'', with 100 indicating peak popularity. This score, extracted using the `pytrends' library, reflects search volume as a proportion of all global searches. The daily Bitcoin log-price series covering the same period, denoted as $\{P_t\}$, is sourced from FRED (link: \url{https://fred.stlouisfed.org/series/CBBTCUSD}). The cleaned dataset comprises $1704$ observations, and will be available (along with all R codes) in a GitHub repository maintained by the first author. 

We model the current day Bitcoin log-price as a function of the last day's GNIS, in order to incorporate for a lag in the impact. The model is,
\begin{equation*}
\label{eq:eq_rtmodel}
    P_t = \mu(G_{t-1}) + \sigma(G_{t-1})\epsilon_t,
\end{equation*}
where $\mu(\cdot)$ and $\sigma(\cdot)$ are unknown smooth functions representing the mean and conditional standard deviation of the log-price variable. Note that since we are working with the log-price, $\sigma(\cdot)$ can be thought of as a proxy for the volatility. For the detection procedure, we apply a rule-of-thumb bandwidth, $h_n = n^{-0.2} = 0.2258$. We also ran the same analysis using a cross-validated bandwidth choice, however the detected structural breaks in these two procedures were reasonably close to one another. The algorithm requires a minimum of $200$ days between consecutive structural breaks. We detect two structural breaks in the mean price level, no breaks were found in the conditional variance of the price and in the return series. The detected breaks in the conditional mean of the Bitcoin price are illustrated in \Cref{fig:sbplot}. 

\begin{figure}[!ht]
\centering
\captionsetup{justification=centering}
\centerline{\includegraphics[width=0.8\columnwidth,keepaspectratio]{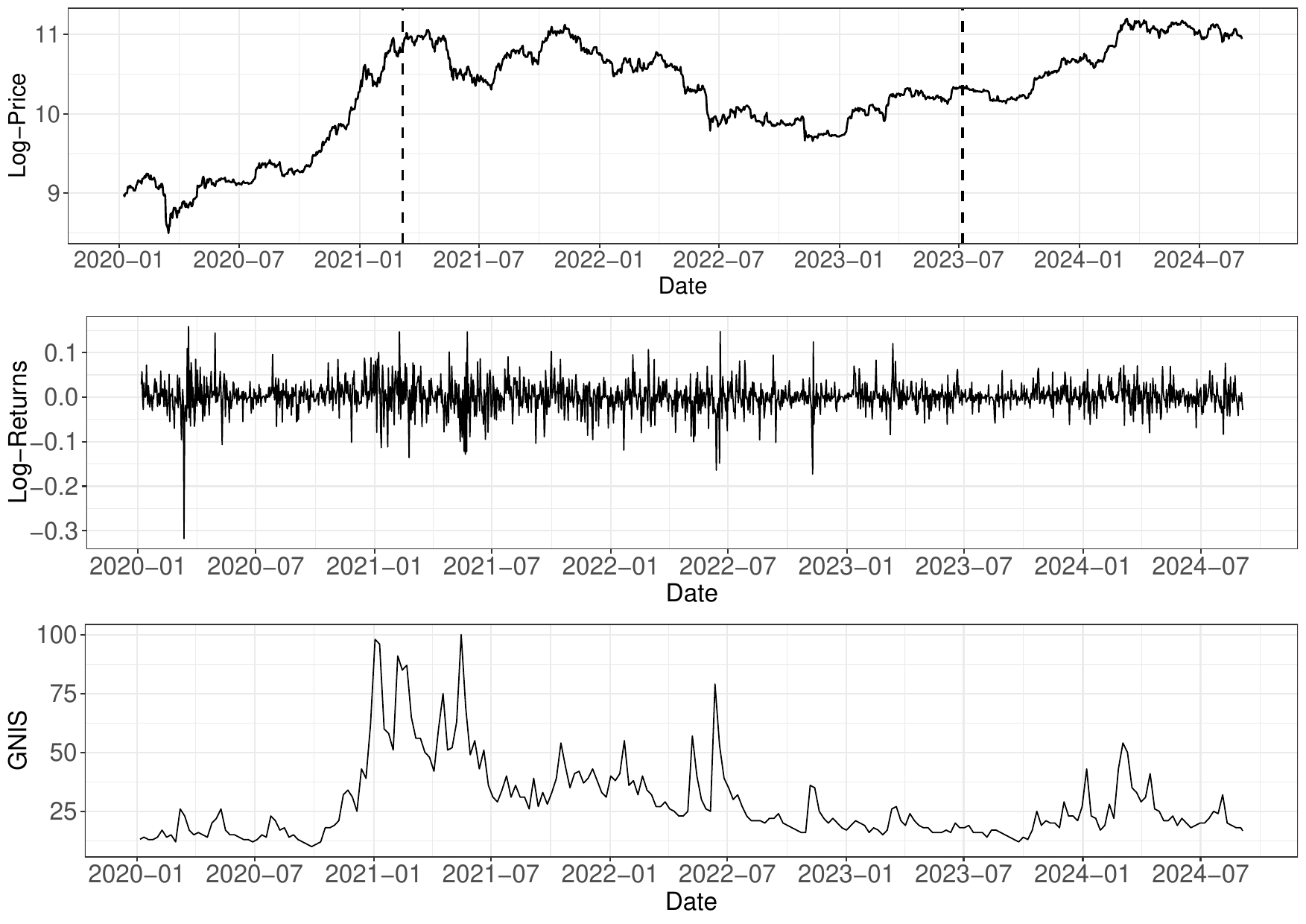}}
\caption{(Top) Bitcoin price (log-transformed) and the detected structural breaks in the impact of public attention on its average level; (middle) log-return of Bitcoin price; (bottom) GNIS series for the time period.}
\label{fig:sbplot}
\end{figure}

The first break, detected on 7 March 2021, aligns with Bitcoin’s rapid rise to an all-time high in early 2021, when institutional interest in Bitcoin surged and the cryptocurrency gained widespread media attention. This period saw companies like Tesla investing in Bitcoin, which drove significant news coverage and heightened interest in the asset, contributing to the price surge. The second structural break on 8 July 2023, suggests another structural shift, possibly linked to market stabilization after major corrections. Prior to this date, the market was volatile, and news was largely negative (declining trend), reflecting uncertainty and regulatory pressures. However, by July, the tone of news coverage shifted to more positive stories, focusing on institutional adoption, technological advancements, and market stabilization. This change in sentiment likely contributed to Bitcoin’s price recovery, with the news acting as a stabilizing factor, reinforcing the upward momentum in the market. For better understanding of the market behavior before and after the structural breaks, the descriptive statistics of the price, returns and GNIS for the entire dataset as well as the three segments generated by the two structural breaks are presented in Section S4 in the supplement.

\section{Concluding remarks}
\label{sec:conclusion}

In this paper, we present a test designed to detect structural breaks occurring within the middle of a dataset. The theoretical foundations of the test's asymptotic properties are rigorously established. Additionally, the test can be adapted to identify structural breaks in the conditional mean, conditional variance, or both, at unknown points in time. The effectiveness of the proposed test is demonstrated through a comprehensive simulation study, which covers a wide array of linear and nonlinear data-generating processes, incorporating various contamination levels (thin-tailed, moderately heavy-tailed, and heavy-tailed distributions). 

Even though in the current work we only focus on detecting structural breaks in the conditional mean and variance only, the same theoretical idea can be utilized to extend the method to detect breaks in higher order conditional moment functions as well. We leave the detailed derivations to a future work. It is also possible to derive similar results using the local linear or spline estimator instead of a kernel estimator with slight modifications in the bandwidth conditions. That can be another interesting future extension of the work. Additionally, our current study focuses on univariate response series, but one may attempt to extend the proposed methodology to high-dimensional setting. This approach would enable the modeling of structural breaks as part of a broader stochastic process, potentially offering richer insights into the underlying dynamics of financial data.

\section*{Data availability statement}
The cleaned data, codes and outputs will be provided in a GitHub repository maintained by the first author.

\bibliography{references}


\newpage

\begin{center}
	{\Large {\bf Supplementary materials}}
\end{center}

\setcounter{section}{0}
\setcounter{equation}{0}
\setcounter{figure}{0}
\setcounter{table}{0}
\def\theequation{S\arabic{section}.\arabic{equation}}
\def\thesection{S\arabic{section}}
\def\thetable{S\arabic{table}.\arabic{table}}
\def\thefigure{S\arabic{figure}.\arabic{figure}}

\section{A note on the choice of the bandwidth}
\label{sec:bw-selection}

The asymptotic theory derived in the main manuscript emphasize the critical role of selecting an appropriate bandwidth for the estimation of the conditional mean and variance functions in order to get desired performance in the proposed tests. The conditions set forth in these results impose specific constraints on both the bandwidth $b_n$ and the dependence range of the covariate $\{X_t\}$. For instance, the first part of the bandwidth condition in Theorem 1, $nb_n^9\log n \to 0$, ensures that the bandwidth is not too large. Meanwhile, the second part, $(\log n)^3/nb_n^3 \to 0$, prevents the bandwidth from being too small. The third condition involving $\Xi_{n}$ guarantees that the dependence range of the covariate process $\{X_t\}$ remains within an appropriate range. Notably, for short-range dependent (SRD) processes, $\Xi_{n} = O(n)$, ensuring that the bandwidth condition is met as long as the first two terms are $o(1)$. This aligns with choosing the bandwidth as
\begin{equation*}
    b_n = O(n^{-\beta}), \quad \beta \in \left(\frac{1}{9}, \frac{1}{3}\right).
\end{equation*}

The situation for long-range dependent (LRD) processes is more complex. Consider, for example, a zero-mean \textit{iid} process $\{\eta_t\}$ with $\eta_0\in\mathcal{L}^q, q\geqslant 2$, and define a process of the form
\begin{equation*}
\label{eq:example_process}
    X_t = \sum_{j=0}^{\infty} a_j\eta_{t-j}, \quad \text{with } a_j = \frac{l(j)}{j^\kappa}, \quad \kappa \in \left(\frac{1}{2},1\right],
\end{equation*}
where $l(\cdot)$ is a slowly varying function. The structure defined above encompasses a broad range of linear and nonlinear processes. For example, if we set 
\begin{equation*}
    a_j = \frac{\Gamma(j+d)}{\Gamma(j+1)\Gamma(d)},
\end{equation*}
where $\Gamma$ denotes the incomplete gamma function, with $d \in \left(0, 0.5\right)$, we obtain the long-range dependent FARIMA process with orders $(0,d,0)$. This specification can also produce various nonlinear processes, including the class of nonlinear AR and ARCH models.

Our setting also accommodates heavy-tailed $\{\eta_t\}$. It can be shown that the process $\{X_t\}$ of this form exhibits LRD characteristics. Elementary calculations yield
\begin{equation*}
    \Xi_{n} = \begin{cases}
        O\left(n^{3-2\kappa}l^2(n)\right) & \text{if } \kappa \in \left(\frac{1}{2},1\right), \\
        O\left(n\left(\sum_{i=1}^{n} \frac{\abs{l(i)}}{i}\right)^2\right) & \text{if } \kappa = 1.
    \end{cases}
\end{equation*}

Now, to obtain a bandwidth condition similar to those in Theorem 1 and Theorem 2, we require $\kappa \in \left(\frac{17}{26}, 1\right]$, under which the bandwidth $b_n=O(n^{-\beta})$ should satisfy
\begin{equation*}
    \beta \in \left(\max\left\{\frac{1}{9}, \frac{2-2\kappa}{3}\right\}, \min\left\{\frac{1}{3}, \frac{3(2\kappa-1)}{4}\right\}\right).
\end{equation*}

In particular, the MSE-optimal bandwidth $n^{-0.2}$ nearly satisfies both the requirements of LRD and SRD processes. In parallel, a common approach to obtaining the optimal bandwidth in nonparametric regression is through a cross-validation procedure, which involves selecting the bandwidth that minimizes a specified loss criterion. One such criterion is the mean squared error (MSE) criterion \citep{hall1991optimal}, defined as
\[
L_{\text{MSE}} = E\left[\left(\widehat{g}(x \mid b_n) - g(x)\right)^2\right], \quad g(\cdot) \equiv \mu(\cdot) \text{ or } \sigma(\cdot),
\]
where \(\widehat{g}(x \mid b_n)\) is the estimate of \(g(x)\) using the Nadaraya-Watson kernel estimator with bandwidth \(b_n\). More generally, the bandwidth can also be chosen by minimizing the conditional weighted mean squared error
\[
L_{\text{WMSE}} = \int_{-\infty}^{\infty} E\left[\left(\widehat{g}(x \mid b_n) - g(x)\right)^2\right] \omega(x) \, dx,
\]
where \(\omega(\cdot)\) is a suitable weight function with compact support. Another pertinent work was done by \cite{giordano2008neural} who designed a feed forward neural network with one hidden layer that is trained to minimize the prediction error of the local linear regression estimates. While this approach may be an effective alternative, it is important to note that the authors worked with only a nonlinear autoregressive model structure.

\section{Proofs}\label{sec:proofs}

We start by proving a few prerequisite lemmas. Recall that the filtration generated by $\{X_t\}$ is $\{\mathcal{F}_t\}$ and the joint filtration generated by $\{X_t\}$ and $\{\epsilon_t\}$ is $\{\mathcal{G}_t\}$. Here, $\{X_t\}$ is independent of $\{\epsilon_t\}$. Denote by $\mathcal{K}$ the set of kernel functions which are symmetric, bounded, are defined on the bounded support $[-1,1]$ and have bounded derivative. Hereafter, we use the notation $\sum_{\mathcal{R}}$ to indicate that the intended result for the summation term holds for both $\mathcal{R}=\tplus$ and $\mathcal{R}=\tminus$. For notational convenience throughout this section, we shall use $K_{b_n}(u)$ to denote the term $K(b_n^{-1}u)$. 

\begin{lemma}
\label{lem:lemma1}
    Define
\begin{equation*}
\mathcal{I}_{n}(x)=\sum_{\mathcal{R}}{\left\{f_X(x\mid \mathcal{F}_{t-1})-E[f_X(x\mid \mathcal{F}_{t-1})]\right\}}.
\end{equation*}

Recall the definition of $\Xi_n$ from Assumption 2 in the main paper. Then, for a fixed $x\in\mathbb{R}$ and $\Delta>0$
\begin{equation*}
\norm{\sup_{\abs{x}\leqslant \Delta}\left\{\abs{\mathcal{I}_{n}(x)}\right\}}_2=O\left(\sqrt{\Xi_{n}}\right).
\end{equation*}
\end{lemma}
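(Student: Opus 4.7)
The plan is to first prove the pointwise bound $\norm{\mathcal{I}_n(x)}_2 = O(\sqrt{\Xi_n})$ via an orthogonal decomposition into martingale differences, and then lift it to the supremum by differentiating in $x$ and applying Minkowski's integral inequality. The fact that $\theta_i$ was defined to control both the $f_X$ and the $f_X'$ projection norms is precisely what allows the second step to go through with no logarithmic loss.

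For the pointwise bound, I would use Wu's standard orthogonal decomposition
\begin{equation*}
f_X(x\mid \mathcal{F}_{t-1}) - E[f_X(x\mid \mathcal{F}_{t-1})] = \sum_{\ell\leqslant t-1} P_\ell f_X(x\mid \mathcal{F}_{t-1}),
\end{equation*}
interchange the order of summation to obtain $\mathcal{I}_n(x) = \sum_\ell \sum_{t\in\mathcal{R},\, t\geqslant \ell+1} P_\ell f_X(x\mid \mathcal{F}_{t-1})$, and then exploit orthogonality across the levels $\ell$ together with the triangle inequality inside each level. By stationarity of $\{X_t\}$, $\norm{P_\ell f_X(x\mid \mathcal{F}_{t-1})}_2 = \norm{P_0 f_X(x\mid \mathcal{F}_{t-\ell-1})}_2 \leqslant \theta_{t-\ell}$. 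Taking $\mathcal{R} = \{1,\ldots,n\}$ without loss of generality, and splitting the outer sum into $\ell\geqslant 0$ and $\ell = -r < 0$ followed by an index shift gives
\begin{equation*}
\norm{\mathcal{I}_n(x)}_2^2 \leqslant \sum_{j=1}^n \Theta_j^2 + \sum_{r=1}^\infty (\Theta_{n+r}-\Theta_r)^2.
\end{equation*}
Using $\Theta_j\leqslant \Theta_{2n}$ for $j\leqslant n$ and $\Theta_{n+r}-\Theta_r\leqslant \Theta_{2n}$ for $r<n$, the two short-range pieces each collapse into $n\Theta_{2n}^2$, so the right-hand side is dominated by $2\Xi_n$, yielding $\norm{\mathcal{I}_n(x)}_2 \leqslant \sqrt{2\Xi_n}$.

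To handle the supremum, the differentiability of $f_X$ on $\mathcal{X}(\delta)$ provided by \Cref{assumption_functions} allows me to write $\mathcal{I}_n(x) = \mathcal{I}_n(-\Delta) + \int_{-\Delta}^x \mathcal{I}'_n(z)\,dz$, where
\begin{equation*}
\mathcal{I}'_n(z) = \sum_{\mathcal{R}}\left\{f'_X(z\mid \mathcal{F}_{t-1}) - E[f'_X(z\mid \mathcal{F}_{t-1})]\right\}
\end{equation*}
has exactly the same structure as $\mathcal{I}_n$ with $f'_X$ replacing $f_X$. Because the $f'_X$-contribution is already absorbed into $\theta_i$, the previous argument applies verbatim to give $\norm{\mathcal{I}'_n(z)}_2 \leqslant \sqrt{2\Xi_n}$ uniformly in $z$. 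The deterministic inequality $\sup_{\abs{x}\leqslant \Delta}\abs{\mathcal{I}_n(x)} \leqslant \abs{\mathcal{I}_n(-\Delta)} + \int_{-\Delta}^\Delta \abs{\mathcal{I}'_n(z)}\,dz$, followed by $L^2$-norms and Minkowski's integral inequality, then delivers
\begin{equation*}
\norm{\sup_{\abs{x}\leqslant \Delta}\abs{\mathcal{I}_n(x)}}_2 \leqslant \norm{\mathcal{I}_n(-\Delta)}_2 + \int_{-\Delta}^\Delta \norm{\mathcal{I}'_n(z)}_2\,dz = O(\sqrt{\Xi_n}).
\end{equation*}

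The main technical hurdle is the bookkeeping in the pointwise step: after interchanging the order of summation one must split $\ell\geqslant 0$ from $\ell<0$, perform the appropriate index shifts, and recognise that dominating the short-range contributions by $n\Theta_{2n}^2$ produces exactly the two pieces appearing in the definition of $\Xi_n$. Once that identification is in place, the upgrade to the sup bound is essentially cost-free, precisely because the inclusion of $f'_X$ in $\theta_i$ makes the differentiated process obey the same variance bound and thereby bypasses any entropy-type chaining argument.
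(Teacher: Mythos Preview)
Your proof is correct and coincides with the paper's approach: the paper simply invokes Theorem~1 of \cite{wu2007strong}, and what you have written is precisely the argument underlying that theorem --- Wu's projection decomposition for the pointwise $L^{2}$ bound, followed by the derivative-plus-Minkowski trick (enabled by the $f_X'$ term in $\theta_i$) to pass to the supremum without chaining. Your index bookkeeping producing $\sum_{j=1}^{n}\Theta_j^2 + \sum_{r\geqslant 1}(\Theta_{n+r}-\Theta_r)^2 \leqslant 2\Xi_n$ is accurate.
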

\begin{proof}
\sloppy
    The proof follows from Theorem 1 of \cite{wu2007strong}.
\end{proof}

\begin{lemma}
\label{lem:lemma2}
    Let $f_1$ and $f_2$ be measurable functions such that $f_2(\epsilon_0)\in\mathcal{L}^2$ and each of $f_1(x)$ and $f_X(x)$ belongs to $\mathcal{C}^{0}\left(x-\delta, x+\delta\right)$ for some $\delta>0$. Further, assume that $f_1$ and $f_X$ do not vanish anywhere on $\mathcal{X}$.  Define, for any $K\in\mathcal{K}$,
    \begin{equation*}
\upsilon_t(x)=\frac{f_1(X_t)\left[f_2(\epsilon_t)-E(f_2(\epsilon_t))\right]K_{b_n}(x-X_t)}{f_1(x)\sqrt{nb_nV(f_2(\epsilon_0))\phi(K)f_X(x)}}.
    \end{equation*}

Then, assuming $b_n\rightarrow 0, nb_n\rightarrow\infty$ and $n^{-2}\Xi_{n}\rightarrow 0$ as $n\rightarrow\infty$, we have for a fixed $x\in\mathcal{X}$,
 \begin{equation*}
 \label{eq:lemma_1_statement}
     S_n(x)=\sum_{\mathcal{R}}{\upsilon_t(x)}\xrightarrow{d}\gauss(0,1).
 \end{equation*}
 \end{lemma}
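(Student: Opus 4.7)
The plan is to recognize $\{\upsilon_t(x)\}_{t\in\mathcal{R}}$ as a martingale difference array with respect to the joint filtration $\{\mathcal{G}_t\}$ and then apply a standard martingale central limit theorem. The martingale difference property follows at once: because $\epsilon_t$ is iid and independent of $\mathcal{G}_{t-1}$ and of $X_t$, and since $f_2(\epsilon_t)$ has been centered, the noise factor can be pulled out of the conditional expectation to give $E[\upsilon_t(x)\mid\mathcal{G}_{t-1}]=0$. It then remains to verify two conditions: (i) the sum of conditional variances $V_n(x):=\sum_{\mathcal{R}}E[\upsilon_t(x)^2\mid\mathcal{G}_{t-1}]$ converges to $1$ in probability, and (ii) a Lindeberg-type negligibility condition.

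For (i), the independence of $\epsilon_t$ from both $\mathcal{G}_{t-1}$ and $X_t$ lets me factor out $V(f_2(\epsilon_0))$, producing
\[
E[\upsilon_t(x)^2\mid\mathcal{G}_{t-1}] \;=\; \frac{E[f_1(X_t)^2 K_{b_n}(x-X_t)^2\mid\mathcal{F}_{t-1}]}{f_1(x)^2\, nb_n\,\phi(K)\, f_X(x)}.
\]
A change of variables $u\mapsto (x-u)/b_n$ inside this conditional expectation, combined with a Taylor expansion at $x$ using smoothness of $f_1$ and of the conditional density $f_X(\cdot\mid\mathcal{F}_{t-1})$, reduces the numerator to $b_n f_1(x)^2\phi(K) f_X(x\mid\mathcal{F}_{t-1})+O(b_n^3)$, with the linear correction vanishing by symmetry of $K$. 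After summation and cancellation I arrive at
\[
V_n(x) \;=\; 1 \;+\; \frac{\mathcal{I}_n(x)}{n f_X(x)} \;+\; O(b_n^2),
\]
where stationarity produces the constant $1$ and $\mathcal{I}_n(x)$ is the quantity of \Cref{lem:lemma1}. Lemma~1 gives $\|\mathcal{I}_n(x)\|_2=O(\sqrt{\Xi_n})$, so the assumption $\Xi_n/n^2\to 0$ plus Chebyshev's inequality force the middle term to $0$ in probability; the last term is $o(1)$ because $b_n\to 0$. Thus $V_n(x)\xrightarrow{P}1$.

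For (ii), whenever $K_{b_n}(x-X_t)\neq 0$ the covariate $X_t$ lies in $[x-b_n,x+b_n]$, on which $f_1$ is bounded by continuity, and $K$ is bounded by assumption. This yields the pointwise bound $|\upsilon_t(x)|\le C|f_2(\epsilon_t)-E f_2(\epsilon_0)|/\sqrt{nb_n}$ for some constant $C$, so that the event $\{|\upsilon_t(x)|>\eta\}$ is contained in $B_{t,n}:=\{|f_2(\epsilon_t)-E f_2(\epsilon_0)|>c\eta\sqrt{nb_n}\}$, which depends only on $\epsilon_t$. Using independence of $\epsilon_t$ once more and the same factorisation as in step~(i),
\[
\sum_{\mathcal{R}} E\!\left[\upsilon_t(x)^2\mathbf{1}_{\{|\upsilon_t(x)|>\eta\}}\bigm|\mathcal{G}_{t-1}\right] \;\le\; V_n(x)\cdot\frac{E\!\left[(f_2(\epsilon_0)-E f_2(\epsilon_0))^2\mathbf{1}_{B_{0,n}}\right]}{V(f_2(\epsilon_0))}.
\]
Because $f_2(\epsilon_0)\in\mathcal{L}^2$ and $nb_n\to\infty$, dominated convergence sends the right-hand factor to $0$, while $V_n(x)=O_P(1)$ by (i); Lindeberg holds. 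The martingale CLT then delivers $S_n(x)\xrightarrow{d}\gauss(0,1)$.

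The main obstacle I anticipate is step~(i): the delicate part is ensuring that the Taylor/kernel-approximation error remains controlled uniformly in $t$ after summation, and then converting the $L^2$ control of $\mathcal{I}_n(x)$ supplied by Lemma~1 into in-probability convergence of $V_n(x)$ under the mild condition $\Xi_n/n^2\to 0$ that permits long-range dependence in $\{X_t\}$. Once $V_n(x)\xrightarrow{P}1$ is in hand, the Lindeberg verification and the final invocation of the martingale CLT are essentially routine.
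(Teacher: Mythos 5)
Your proposal is correct and follows essentially the same route as the paper: identify $\{\upsilon_t(x)\}$ as a martingale difference sequence for $\{\mathcal{G}_t\}$, use \Cref{lem:lemma1} together with $\Xi_n/n^2\to 0$ to drive the conditional variance sum to $1$, verify Lindeberg by exploiting the boundedness of $f_1 K_{b_n}$ and the independence of $\{\epsilon_t\}$, and invoke the martingale CLT. The only cosmetic differences are that you work directly with $E[\gamma_t\mid\mathcal{F}_{t-1}]$ (so you only need the paper's $V_t$-type term, not its $U_t+V_t$ split), and that your claimed $O(b_n^3)$ Taylor remainder with the linear term cancelling by symmetry presumes two derivatives where the lemma only grants $\mathcal{C}^0$ — but plain continuity already yields the $o(1)$ approximation you actually need, so the conclusion is unaffected.
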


\begin{proof}
\sloppy
Due to the independence of $\{X_t\}$ and $\{\epsilon_t\}$, it is straightforward to show that $\{\upsilon_t\}$ forms a martingale difference sequence with respect to the filtration $\{\mathcal{G}_t\}$. Now, define
\begin{gather*}
\gamma_t(x)=f_1^2(X_t)K_{b_n}(x-X_t).
\end{gather*}

Setting $U_t= \gamma_t(x)-E(\gamma_t(x)\mid \mathcal{F}_{t-1})$ and $V_t=E(\gamma_t(x)\mid \mathcal{F}_{t-1})-E(\gamma_t)$, we can write
\begin{equation*}
    \sum_{\mathcal{R}}{\left\{\gamma_t-E(\gamma_t)\right\}}= \sum_{\mathcal{R}}{U_t}+\sum_{\mathcal{R}}{V_t}.
\end{equation*}

Since $\{U_t\}$ forms a martingale difference sequence with respect to the filtration $\{\mathcal{F}_t\}$ and $E(U_t^2)=O(b_n)$, we can write $\sum_{\mathcal{R}}{U_t}=O_p(\sqrt{nb_n})$. On the other hand, taking advantage of the fact that $f_1(x)\in\mathcal{C}^{0}(x-\delta,x+\delta)$ and $K\in\mathcal{K}$, \Cref{lem:lemma1} implies that
\begin{equation*}
\label{eq:eq_rn}
    \norm{\sum_{\mathcal{R}}{V_t}}_2=O\left(b_n\sqrt{\Xi_n}\right).
\end{equation*}

Simple calculations show that since $b_n\rightarrow 0$, $nb_n\rightarrow\infty$ and $n^{-2}\Xi_n\rightarrow 0$,
\begin{equation}
\label{eq:martingale_2}
    \sum_{\mathcal{R}}{E\left(\upsilon_t^2\mid \mathcal{G}_{t-1}\right)}=\frac{\sum_{\mathcal{R}}{U_t}+\sum_{\mathcal{R}}{V_t}+\sum_{\mathcal{R}}{E(\gamma_t)}}{nb_n\phi(K)f_X(x)f_1^2(x)}\xrightarrow{P}1.
\end{equation}

Next, since $f_1(x)\in\mathcal{C}^0(x-\delta,x+\delta)$ and $K$ is bounded, we must have, for some $c>0$ and large enough $n$,
\begin{equation*}
    \sup_{u\in\mathbb{R}}\left\{\abs{f_1(u)K_{b_n}(x-u)}\right\}\leqslant c.
\end{equation*}

For any $s>0$, define $d(x)=c^{-1}sf_1(x)\sqrt{n b_n V(f_2(\epsilon_0))\phi(K)f_X(x)}$. Again, exploiting the independence of $\{X_t\}$ and $\{\epsilon_t\}$, we deduce
\begin{equation}
\label{eq:martingale_1}
        \sum_{\mathcal{R}}{E\left(\upsilon_t^2(x)\mathbf{1}_{\left\{\abs{\upsilon_t(x)}\geqslant s\right\}}\right)} \leqslant \frac{E\left(\gamma(X_t)\right)  E\left(\left[f_2(\epsilon_0)-E(f_2(\epsilon_0))\right]^2\mathbf{1}_{\left\{\abs{f_2(\epsilon_0)-E(f_2(\epsilon_0))}\geqslant d(x)\right\}}\right)}{b_nV(f_2(\epsilon_0))f_1^2(x)\phi(K)f_X(x)}
    \rightarrow 0.
\end{equation}
 
Finally, combining \eqref{eq:martingale_2} and \eqref{eq:martingale_1}, we can conclude that the sequence $\{\upsilon_t(x)\}$ satisfies the conditions for martingale central limit theorem, which implies the statement of the lemma.
\end{proof}

\begin{lemma}
\label{lem:lemma3}
Let $K\in\mathcal{K}$ and $f_2(\epsilon_0)\in\mathcal{L}^3$. Assume that $f_X$ and $f_1$ never vanish on $\mathcal{X}$, and each of $f_X$ and $f_1$ belongs to $\mathcal{C}^4(\mathcal{X}(\delta))$ for some $\delta>0$. Choose the bandwidth $b_n$ such that
    \begin{equation}
        \label{eq:bw_condition_lemma3}
         b_n^{\frac{4}{3}}\log n+\frac{(\log n)^3}{nb_n^3}+\frac{\Xi_{n}(\log n)^2}{n^{2}b_n^{\frac{4}{3}}}\xrightarrow{n\rightarrow\infty} 0.
    \end{equation}

Recall $S_n(x)$ from \Cref{lem:lemma2}. Then
    \begin{equation*}
        \label{eq:lemma3_statement}
        \lim_{n\rightarrow\infty}P\left(\sup_{x\in\Pi_n}\left\{\abs{S_n(x)}\right\}\leqslant \mathcal{B}_{m_n}(z)\right)=e^{-2e^{-z}},
    \end{equation*}
where $\{\Pi_n\}$ and $\mathcal{B}_n(\cdot)$ are as defined in Section 3.2 of the main manuscript.
\end{lemma}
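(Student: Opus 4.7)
The plan is to realize $\sup_{x\in\Pi_n}|S_n(x)|$ as the maximum of an asymptotically independent standard Gaussian sequence and invoke the classical Gumbel extreme-value limit. The grid $\Pi_n$ was deliberately constructed with spacing $2b_n$, which equals the diameter of the support of $K_{b_n}(\cdot)$. For any two distinct grid points $x_i,x_j\in\Pi_n$ one therefore has
$$K_{b_n}(x_i-X_t)\,K_{b_n}(x_j-X_t)=0 \;\; \text{almost surely for every } t,$$
and since $\{\upsilon_t(x)\}$ is a martingale difference sequence with respect to $\{\mathcal{G}_t\}$ (as established in the proof of \Cref{lem:lemma2}), this gives $E[S_n(x_i)S_n(x_j)]=0$ for $i\neq j$. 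Combining \Cref{lem:lemma2} with the Cram\'er--Wold device then yields joint asymptotic normality of any finite subcollection of $\{S_n(x_{j})\}$ to independent $\gauss(0,1)$ coordinates.

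Next, I would upgrade this finite-dimensional convergence to a uniform Gaussian approximation across the entire grid, by applying the martingale maximum deviation theorem of Grama and Haeusler (2006) to the triangular array $\{\upsilon_t(x):t\in\mathcal{R},\,x\in\Pi_n\}$. The three summands in the bandwidth hypothesis \eqref{eq:bw_condition_lemma3} are tuned so that each of the three remainders in that theorem vanishes faster than $(\log m_n)^{-1/2}$: the $b_n^{4/3}\log n$ term controls the deterministic smoothing bias of $K_{b_n}$ uniformly across $\Pi_n$; the $(\log n)^3/(nb_n^3)$ term governs the Lindeberg/truncation remainder, using the $\mathcal{L}^3$ moment assumption on $f_2(\epsilon_0)$; and the $\Xi_n(\log n)^2/(n^2 b_n^{4/3})$ term handles, via \Cref{lem:lemma1}, the uniform deviation of the quadratic characteristic $\sum_{\mathcal{R}}E[\upsilon_t(x)^2\mid\mathcal{G}_{t-1}]$ from $1$ under the long-range-dependence regime allowed by \Cref{Xtdependence}. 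The output of this step is an approximation
$$\sup_{x\in\Pi_n}\bigl|S_n(x)-\mathcal{Z}(x)\bigr|=o_P\!\left((\log m_n)^{-1/2}\right),$$
where $\{\mathcal{Z}(x):x\in\Pi_n\}$ is an independent centered Gaussian array matching the covariance structure of $\{S_n(x)\}_{x\in\Pi_n}$.

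Finally, since $\mathcal{B}_{m_n}(z)\asymp\sqrt{\log m_n}$, the error from the previous step is negligible at the threshold level, and the statement reduces to the classical
$$P\!\left(\max_{1\leqslant j\leqslant m_n}|\mathcal{Z}(x_j)|\leqslant\mathcal{B}_{m_n}(z)\right)\longrightarrow e^{-2e^{-z}}$$
for iid standard normals. This is the Gumbel limit for the maximum-in-absolute-value of an iid standard normal sample: the centering and scaling in $\mathcal{B}_{m_n}(\cdot)$ are exactly the Fisher--Tippett normalizing constants for the half-normal, and the factor $2$ in the limit encodes the double-tail contribution.

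The principal obstacle is the upgrade step: arranging Grama--Haeusler's approximation to hold at rate $o((\log m_n)^{-1/2})$ simultaneously across all $m_n$ grid points. The tight interplay between the three terms in \eqref{eq:bw_condition_lemma3}, the $\mathcal{L}^3$ moment condition, and the $\Xi_n$-quantification of covariate dependence is exactly what this step is designed to exploit; once that uniform rate is secured, the remaining book-keeping is essentially routine.
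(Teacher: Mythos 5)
Your opening observation is correct and matches the paper: the grid spacing $2b_n$ makes the kernel windows at distinct grid points disjoint, so the quadratic characteristic matrix of the vector $\bigl(S_n(x_{t_{j_1}}),\ldots,S_n(x_{t_{j_k}})\bigr)$ is diagonal, which is the source of the asymptotic independence. However, the central step of your argument --- the uniform Gaussian approximation $\sup_{x\in\Pi_n}\abs{S_n(x)-\mathcal{Z}(x)}=o_P\bigl((\log m_n)^{-1/2}\bigr)$ --- is a genuine gap, for two reasons. First, the theorem of \cite{grama2006asymptotic} is not a coupling result: it provides asymptotic expansions with controlled \emph{relative} error for moderate-deviation probabilities of martingales, not an almost-sure or in-probability approximation of the process by a Gaussian array. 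Invoking it to produce the coupling you describe is a misattribution; securing such a coupling uniformly over $m_n\to\infty$ grid points would require an entirely different device (a Yurinskii- or Chernozhukov--Chetverikov--Kato-type bound), which you do not supply and which you yourself flag as ``the principal obstacle.'' Second, your finite-dimensional CLT from \Cref{lem:lemma2} plus Cram\'er--Wold is of no help at the relevant threshold: $\mathcal{B}_{m_n}(z)\sim\sqrt{2\log m_n}\to\infty$ sits in the moderate-deviation regime, where weak convergence of fixed finite-dimensional distributions says nothing about the tail probabilities that drive the extreme-value limit. So the proposal reduces the lemma to exactly the statement it needed to prove.

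The paper's proof avoids the coupling altogether. It fixes $k$, applies the Grama--Haeusler expansion directly to the joint exceedance probability $P\bigl(\bigcap_{r=1}^{k}\{\abs{S_n(x_{t_{j_r}})}\geqslant\mathcal{B}_{m_n}(z)\}\bigr)$, and shows it equals $\bigl(2e^{-z}/m_n\bigr)^{k}(1+o(1))$; the bandwidth condition \eqref{eq:bw_condition_lemma3} is used precisely to make the expansion's error terms, after multiplication by $\{1+\mathcal{B}_{m_n}(z)\}^{4}e^{\mathcal{B}_{m_n}^{2}(z)/2}\asymp m_n(\log m_n)^{2}$, tend to zero --- i.e.\ to control the \emph{relative} error of a probability of order $m_n^{-k}$. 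The Gumbel limit then follows by inclusion--exclusion over the $m_n$ events, with no Gaussian comparison process ever constructed. If you want to salvage your route, you would need to either (i) prove the uniform coupling by independent means, or (ii) replace your step two with the moderate-deviation-plus-Bonferroni argument, in which case your step one becomes the computation of the diagonal quadratic characteristic and the $\mathcal{L}^3$ Lyapunov-type sums that feed into the Grama--Haeusler hypotheses.
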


\begin{proof}
    \sloppy
    For a fixed $x\in\mathcal{X}$, recall the definition of $\{\upsilon_t(x)\}$ from \Cref{lem:lemma2}. Choose $k$ random points from $\Pi_n$ and denote them by $x_{t_{j_1}},x_{t_{j_2}},...,x_{t_{j_k}}$. Set $\mathbf{x}=(x_{t_{j_1}},x_{t_{j_2}},...,x_{t_{j_k}})'$, and for each $l=1,2,\ldots,k$  define $S_{n}(x_{t_{j_l}})=\sum_{\mathcal{R}}{\upsilon_{t}(x_{j_l})}$  and $\mathbf{S}_{n,k}(\mathbf{x})=(S_{n}(x_{t_{j_1}}),S_{n}(x_{t_{j_2}}),...,S_{n}(x_{t_{j_k}}))'$. Without loss of generality, we may assume that the first $p$ elements of $\mathbf{x}$ are from $\mathcal{X}_-$ and rest are from $\mathcal{X}_+$, where $\mathcal{X}_-$ and $\mathcal{X}_+$ respectively denote the ranges of the covariate $X$ in the segments $\tminus$ and $\tplus$.  Let $\mathcal{Q}$ be the quadratic characteristic matrix of $\mathbf{S}_{n,k}$, i.e., for any $\mathbf{z} \in \R^k$, 
        \begin{equation*}
            \mathcal{Q}(\mathbf{z})=\sum_{\mathcal{R}}{E\left(\upsilon_{t}(\mathbf{z})\upsilon_{t}^\intercal(\mathbf{z})\big|\mathcal{G}_{t-1}\right)}.
        \end{equation*}
     
    Since the support of the kernel $K$ is $[-1,1]$, it is clear that $\mathcal{Q}$ is a diagonal matrix. The diagonal terms of $\mathcal{Q}$ are given by
    \begin{gather*}
        \mathcal{Q}_{r,r}=
        \frac{1}{f_1^{2}(x_{t_{j_r}})n^{\alpha}b_n\phi(K)f_X(x_{t_{j_r}})}\sum_{\mathcal{R}}{E\left[\gamma_{t}(x_{t_{j_r}})\mid \mathcal{F}_{t-1}\right]},
        \label{eq:fut_qrrform}
    \end{gather*}
   for each $r=1,2,\ldots,k$. Now, define,
    \begin{gather*}
    \vartheta_t(r) = f_1^2(X_t)K^2_{b_n}\left(x_{t_{j_r}}-X_t\right) - E\left[f_1^2(X_t)K^2_{b_n}\left(x_{t_{j_r}}-X_t\right)\mid \mathcal{F}_{t-1}\right],\\
        \varrho_t(r)= E\left[f_1^2(X_t)K^2_{b_n}\left(x_{t_{j_r}}-X_t\right) \mid \mathcal{F}_{t-1}\right]-E\left[f_1^2(X_t)K^2_{b_n}\left(x_{t_{j_r}}-X_t\right)\right].
    \end{gather*}
    
    Using an argument similar to that used in \Cref{lem:lemma2}, it is straightforward to show that 
    \begin{equation}
    \label{eq:varrho_vartheta_norm}
        \norm{\sum_{\mathcal{R}}{\vartheta_t(r)}}_2=O\left(\sqrt{nb_n}\right)\text{ and }\norm{\sum_{\mathcal{R}}{\varrho_t(r)}}_2=O\left(b_n\sqrt{\Xi_{n}}\right).
    \end{equation}
    
    On the other hand, using Taylor's expansion, we can write
    \begin{equation}
        \label{eq:taylor_expansion}
        \abs{\sum_{\mathcal{R}}{E\left[f_1^2(X_t)K^2_{b_n}\left(x_{t_{j_r}}-X_t\right)\right]}-nb_n\mathcal{Q}_{r,r}}=O\left(nb_n^3\right).
    \end{equation}
    
    Combining \eqref{eq:varrho_vartheta_norm} and \eqref{eq:taylor_expansion}, we have,
    \begin{gather}
        \label{eq:qrr_norm_2}
        E\left(\abs{\mathcal{Q}_{r,r'}-\mathcal{I}_{r,r'}}^{\frac{3}{2}}\right)=O\left(\left(\frac{1}{\sqrt{nb_n}}+b_n^4+\frac{\sqrt{\Xi_{n}}}{n}\right)^{\frac{3}{2}}\right)\forall \; r,r',
    \end{gather}
    where $\mathcal{I}$ is the $k\times k$ identity matrix. Elementary calculations also show that
    \begin{equation}
\label{eq:zeta_3_sum}
\sum_{\mathcal{R}}{\abs{\upsilon_t(x_{t_{j_r}})}^3}=O\left(\frac{1}{\sqrt{nb_n}}\right),
    \end{equation}
   so that \eqref{eq:qrr_norm_2} and \eqref{eq:zeta_3_sum} together imply 
    \begin{equation*}
    \label{eq:ghcondition1}
        E\left(\abs{\mathcal{Q}_{r,r'}-\mathcal{I}_{r,r'}}^{\frac{3}{2}}\right)+\sum_{\mathcal{R}}{\abs{\upsilon_t(x_{t_{j_r}})}^3}=\frac{1}{\sqrt{nb_n}}+b_n^3+\frac{\Xi_{n}^{\frac{3}{4}}}{n^{\frac{3}{2}}}.
    \end{equation*}
    
    Next, let $\mathcal{A}_{j_r}$ be the event $\left\{\abs{S_n(x_{t_{j_r}})}\geqslant \mathcal{B}_{m_n}(z)\right\}$ and $\mathcal{E}_{m_n}=\bigcup_{j=0}^{m_n}{\mathcal{A}_{j_r}}$. Under \eqref{eq:bw_condition_lemma3}, it can be easily verified that, for any fixed $x\in\mathcal{X}$,
    \begin{equation*}
    \label{eq:bmnz_condition}
        \left(\frac{1}{\sqrt{nb_n}}+b_n^3+\frac{\Xi_{n}^{\frac{3}{4}}}{n^{\frac{3}{2}}}\right)\left\{1+\mathcal{B}_{m_n}(z)\right\}^4e^{\frac{\mathcal{B}_{m_n}^2(z)}{2}}\xrightarrow{n\rightarrow\infty}0.
    \end{equation*}
   
   Now, using Theorem 1 of \cite{grama2006asymptotic}, we can write
    \begin{equation*}
        \label{eq:lemma3_intersection}
        P\left(\bigcap_{r=1}^{k}{\mathcal{A}_{j_r}}\right)=\left(\frac{2e^{-z}}{m_n}\right)^k\left(1+o(1)\right).
    \end{equation*}
   
   The lemma hence follows using the principle of inclusion and exclusion.
\end{proof}

\begin{proof}[Proof of Theorem 1]
\sloppy
    Define 
\begin{equation*}
    \widehat w_{n}^{(1)}(x)= \frac{f_{X}(x)}{\widehat f_{X}^{(1)}(x)}\quad\text{and}\quad\widehat w_{n}^{(2)}(x)= \frac{f_{X}(x)}{\widehat f_{X}^{(2)}(x)}.
\end{equation*}
\Cref{lem:lemma1} can be utilized to prove that
    \begin{equation*}
    \widehat w_{n}^{(i)}(x)=1+O_{p}\left(\frac{1}{\sqrt{{nb_n}}}+b_n^2+\frac{\Xi_{n}^{\frac{1}{2}}}{n}\right),i=1,2.
\end{equation*}

Further, defining
\begin{equation*}
\begin{split}
    \mathcal{U}_{n}^{(1)}(x) &= \frac{1}{nb_nf_{X}(x)}\sum_{\tminus}{(\mu_{1}(X_{t})-\mu_{1}(x))K_{b_n}(x-X_t)}, \\
    \mathcal{U}_{n}^{(2)}(x) &= \frac{1}{nb_nf_{X}(x)}\sum_{\tplus}{(\mu_{2}(X_{t})-\mu_{2}(x))K_{b_n}(x-X_t)},
\end{split}
\end{equation*}
it can be shown that, for $i=1,2$,
\begin{equation*}
\label{eq:un1asym}
\mathcal{U}_{n}^{(i)}(x)=b_n^{2}\psi(K)\rho_{\mu_{1}}(x)+O_{p}\left(\sqrt{\frac{b_n}{n}}+b_n^{2}+\frac{\sqrt{\Xi_{n}}b_n}{n}\right).
\end{equation*}

Therefore, under bandwidth conditions similar to those specified in \Cref{lem:lemma2}, we have
\begin{equation*}
    \label{eq:mu_uv}
    \left(\widehat \mu_{1}(x)-\widehat \mu_{2}(x)\right)-\left(\mu_{1}(x)-\mu_{2}(x)\right)-\left(b_n^{2}\psi(K)\rho_{\mu_{1}}(x)-b_n^{2}\psi(K)\rho_{\mu_{2}}(x)\right)= \mathcal{V}_{\T}(x)
\end{equation*}
where 
\begin{equation*}
    \mathcal{V}_{\T}(x)= \frac{1}{nb_nf_{X}(x)}\sum_{\tminus}{\sigma(X_{t})\epsilon_{t}K_{b_n}(x-X_t)}-\frac{1}{nb_nf_{X}(x)}\sum_{\tplus}{\sigma(X_{t})\epsilon_{t}K_{b_n}(x-X_t)}.
\end{equation*}

Theorem 1 can then be proved using \Cref{lem:lemma2} with specific forms of the functions $f_1$ and $f_2$. The proofs of Corollary 1 and Corollary 2 follow along similar lines.
\end{proof}

\begin{proof}[Proof of Theorem 2]
Define 
\begin{equation*}
\begin{split}
    W_{n}^{(1)}(x) &= \frac{1}{nb_nf_{X}(x)}\sum_{\tminus}{\sigma_{1}(X_{t})\epsilon_{t}K_{b_n}(x-X_t)}, \\
    W_{n}^{(2)}(x) &= \frac{1}{nb_nf_{X}(x)}\sum_{\tplus}{\sigma_{2}(X_{t})\epsilon_{t}K_{b_n}(x-X_t)}.
\end{split}
\end{equation*}
One can similarly define $W_n^{*^{(1)}}(x)$ and $W_n^{*^{(2)}}(x)$ by replacing $K$ by $K^*$ in the above equations. Further, define
\begin{equation*}
        r_n =\sqrt{\frac{b_n\log n}{n}}+b_n^4+\frac{\sqrt{b_n\Xi_{n}}}{n},        q_n =\sqrt{\frac{\log n}{nb_n}}+b_n^2+\frac{\sqrt{\Xi_{n}}}{n},
        \chi_n = \sqrt{\frac{\log n}{nb_n}}+\frac{\log n}{\sqrt{n^{3}b_n^5}},
\end{equation*}
and denote $\Delta_n =r_n+q_n\left(b_n^2+\chi_n\right)$. Simple calculations show that
\begin{equation*}
    \widehat \mu^*(x)-\mu(x)= 
\begin{cases}
    W_{n}^{*(1)}(x)+O_{p}(\Delta_{n}) & \text{in } \tminus,\\
     W_{n}^{*(2)}(x)+O_{p}(\Delta_{n})  & \text{in } \tplus. 
\end{cases}
\label{eq:eqsigma2}
\end{equation*}

Assuming $\Delta_{n}=o\left(\left(n b_{n}\right)^{-\frac{1}{2}}\right)$, we have
\begin{equation*}
\label{eq:s1_convergence}
    \Delta_{n}\sum_{\tminus}{\left(Y_{t}-\widehat \mu^{*}(X_{t})\right)K_{b_n}(x-X_t)}\xrightarrow{P}0.
\end{equation*}

Then, we can write
\begin{equation}
\label{eq:sigma1_expression1}
    \widehat \sigma_{1}^{2}(x)= \frac{1}{nb_n\widehat f_X(x)}\sum_{\tminus}{\left(\sigma_{1}(X_{t})\epsilon_{t}-W_{n}^{*(1)}(X_{t})+O_{p}(\Delta_{n})\right)^{2}K_{b_n}(x-X_t)}.
\end{equation}

Since $\sup_{x\in\tminus}\left\{\abs{W_n^{*^{(1)}}}(x)\right\}=O_p(\chi_n)$, it can be shown that 
\begin{equation}
\label{eq:sigma1_expression2}
    \begin{split}
    &\sum_{\tminus}{\left(\sigma_{1}(X_{t})\epsilon_{t}-W_{n}^{*(1)}(X_{t})+O_{p}(\Delta_{n})\right)^{2}K_{b_n}(x-X_t)} = \sum_{\tminus}{\left(\sigma_{1}(X_{t})\epsilon_{t}\right)^{2}K_{b_n}(x-X_t)} \\
    &\qquad + O_{p}(\Delta_{n})\sum_{\tminus}{\sigma_{1}(X_{t})\epsilon_{t}K_{b_n}(x-X_t)} -2\sum_{\tminus}{\sigma_{1}(X_{t})\epsilon_{t}W_{n}^{*(1)}(X_{t})K_{b_n}(x-X_t)}.
    \end{split}
\end{equation}

Combining \eqref{eq:sigma1_expression1} and \eqref{eq:sigma1_expression2} and taking absolute value on both sides, we can write
\begin{equation*}
\label{eq:sigma1_expression4}
    \widehat \sigma_{1}^2(x)\leqslant \frac{T_{\tminus}(x)}{nb_n\widehat f_X(x)}+\frac{2\abs{L_{\tminus}(x)}+O_{p}(\Delta_{n})J_{\tminus}(x)}{nb_n\widehat f_X(x)}.
\end{equation*}
where
\begin{align*}
      T_{\tminus}(x)&=\sum_{\tminus}{(\sigma_{1}(X_{t})\epsilon_{t})^{2}K_{b_n}(x-X_t)}, \\ 
    L_{\tminus}(x)&=\sum_{\tminus}{\sigma_{1}(X_{t})\epsilon_{t}W_{n}^{*(1)}(X_{t})K_{b_n}(x-X_t)},\\
     J_{\tminus}(x)& = \sum_{\tminus}{\sigma_{1}(X_{t})\abs{\epsilon_{t}}K_{b_n}(x-X_t)}.
\end{align*} 

Since $J_{\tminus}(x)= O_{p}\left(nb_n\left(1+q_n+\chi_n\right)\right)$ and $\sup_{x\in\tminus}\left\{\abs{L_{\tminus}(x)}\right\}= O_{p}\left(b_n^{-\frac{3}{2}}\right)$, under pre-specified bandwidth conditions we can write, $ \widehat\sigma_{1}^{2}(x)=(nb_n\widehat f_X(x))^{-1}T_{\tminus}(x)$. Defining 
\begin{align*}
    D_{\tminus}(x) = \sum_{\tminus}{\left(\sigma_{1}^{2}(X_{t})-\sigma_{1}^{2}(x)\right)K_{b_n}(x-X_t)}, \;
    E_{\tminus}(x) = \sum_{\tminus}{\sigma_{1}^{2}(X_{t})\left(\epsilon_{t}^{2}-1\right)K_{b_n}(x-X_t)},
\end{align*} 
we can write
\begin{equation}
    \widehat\sigma_{1}^{2}(x)-\sigma_{1}^{2}(x)= \frac{D_{\tminus}(x)+E_{\tminus}(x)}{nb_n\widehat f_X(x)}.
    \label{eq:sigma1_expresion6}
\end{equation}

On the other hand,
\begin{equation}
\label{eq:dminus_convergence}
     \sup_{x\in\tminus}\left\{\abs{\frac{D_{\tminus}(x)}{nb_n\widehat f_X(x)}-b_n^2\psi(K)\rho_{\sigma_{1}}(x)}\right\}= O_{p}(r_n),
\end{equation}
where $\rho_{\sigma_i}(x)=2(\sigma_i'(x))^2+2\sigma_i(x)\sigma_i''(x)+4(f_X(x))^{-1}\sigma_i(x)\sigma_i'(x)f_X'(x)\text{ for } i=1,2$. Now, using \eqref{eq:dminus_convergence} in \eqref{eq:sigma1_expresion6}, we can write
\begin{equation*}
     \widehat \sigma_{1}^{2}(x)-\sigma_{1}^{2}(x)-b_n^2\psi(K)\rho_{\sigma_{1}}(x)= \frac{E_{\tminus}(x)}{nb_n\widehat f_X(x)}.
    \label{eq:eqsigma17}
\end{equation*}

Performing a similar calculation in $\tplus$ and implementing the jackknife correction using the kernel $K^*$, we have
\begin{equation*}
    \left(\widehat \sigma_{1}^{*^2}(x)-\widehat\sigma_{2}^{*^2}(x)\right)-\left(\sigma_{1}^{2}(x)-\sigma_{2}^{2}(x)\right)=\frac{E_{\tminus}^*(x)}{nb_n\widehat f_X(x)}-\frac{E_{\tplus}^*(x)}{nb_n\widehat f_X(x)}.
    \label{eq:eqsigma21}
\end{equation*}

Theorem 2 now follows from \Cref{lem:lemma2} by taking the appropriate forms of $f_1$ and $f_2$.
\end{proof}

\begin{proof}[Proofs of Theorems 3 and 4]
Fix $n$, pick any $k$ real numbers from $\Pi_n\subset \mathcal{X}$ and denote them by $x_{t_{j_1}},x_{t_{j_2}},...,x_{t_{j_k}}$. Let $\mathbf{x}=(x_{t_{j_1}},x_{t_{j_2}},...,x_{t_{j_k}})'$. Under the assumption that the conditional variance process remains the same throughout the time horizon, following an argument similar to that used in the proof of Theorem 1, the null hypothesis indicates that for any $x\in\mathcal{X}$,
\begin{equation*}
\begin{split}
    \widehat\mu_1^*(x)-\widehat\mu_2^*(x) &= \frac{1}{nb_nf_X(x)}\sum_{\tminus}{\sigma(X_t)\epsilon_t K^*_{b_n}(x-X_t)} 
- \frac{1}{nb_nf_X(x)}\sum_{\tplus}{\sigma(X_t)\epsilon_tK^*_{b_n}(x-X_t)}.
\end{split}
\end{equation*}

Define
\[
    \zeta_{t}^{(1)}(x)= 
    \frac{\sigma(X_t)\epsilon_{t}K^*_{b_n}(x-X_t)}{\sigma(x)\sqrt{nb_n\phi(K^*)f_{X}(x)}}\mathbf{I}_{\{t\in\tminus\}}, \quad 
    \zeta_{t}^{(2)}(x)= \frac{\sigma(X_t)\epsilon_{t}K^*_{b_n}(x-X_t)}{\sigma(x)\sqrt{nb_n\phi(K^*)f_{X}(x)}}\mathbf{I}_{\{t\in\tplus\}}.
\]    
Let 
\begin{equation*}
    \begin{split}
        \zeta_{t}(x) &= \zeta_{t}^{(1)}(x)\mathbf{I}_{\{t\in\tminus\}}-\zeta_{t}^{(2)}(x)\mathbf{I}_{\{t\in\tplus\}}, \; S_n(x_{j_m}) = \sum_{\Pi_n}{\zeta_t(x_{j_m})}, \\
        \bm{\lambda}_t(x) &= \left(\zeta_t(x_{t_{j_1}}),\zeta_t(x_{t_{j_2}}),\hdots,\zeta_t(x_{t_{j_k}})\right)', \;
        \mathbf{S}_{n,k} = \left(\sum_{\Pi_n}{S_n(x_{j_1})},\sum_{\Pi_n}{S_n(x_{j_2})},\hdots,\sum_{\Pi_n}{S_n(x_{j_k})}\right)'.
    \end{split}
\end{equation*}

Theorem 3 then follows by applying \Cref{lem:lemma3} to the quadratic characteristic matrix of $\mathbf{S}_{n,k}$. The proof of Theorem 4 follows along similar lines and we omit the details since no technical difficulties are involved.
\end{proof}

\begin{proof}[Proof of Theorem 5]
Assume that we split the data at $t=t_0$. Due to the assumption of increasing sampling frequency, we can consider the number of data points on either side of $t_0$ as $n\rightarrow\infty$. Let $\widetilde\mu(x\mid t_0)=\mu_{t_0^-}(x)-\mu_{t_0^+}(x)$, where $\mu_{t_0^-}(\cdot)$ and $\mu_{t_0^+}(\cdot)$ respectively denote the true conditional mean function before and after the splitting point $t_0$. We denote by $\widehat{\widetilde\mu}(x\mid t_0)$ the Nadaraya Watson kernel estimate of $\widetilde\mu(x\mid t_0)$. Clearly, under relevant bandwidth conditions, as $n\rightarrow\infty$, $\widehat{\widetilde\mu}(x\mid t_0)\xrightarrow{P}\widetilde\mu(x\mid t_0)$
for any $x\in\mathcal{X}$ and $t\in\T$. Now, note that $\widetilde\mu:\mathcal{X}\times\mathcal{T}\rightarrow\mathbb{R}$. Define a map $C:\T\rightarrow\mathbf{C}(\mathcal{X})$, where $\mathbf{C}(A)$ for any set $A$ denotes the set of non-null compact subsets of $A$, as $C(t_j)=x_{t_j}$ where $x_{t_j}\in\Pi_n,t_j\in\T_n$. Since $C$ is a compact-valued map as $n\rightarrow\infty$, then using Berge's maximum theorem we can write $\widehat\mu_{\text{cp}}(t_0)\rightarrow\mu_{\text{cp}}(t_0)$. The consistency of $\widehat\tau_0^\mu$ thus is a consequence of the uniqueness of the maximum of $\widehat{\widetilde\mu}$ in its second argument. One can follow a similar approach for proving the consistency of $\widehat\tau_0^\sigma$.
\end{proof}

\section{Simulation studies}

We start by generating the covariate series $\{X_t\}$ from different data generating processes (DGP), the response $\{Y_t\}$ is then obtained through certain forms of conditional mean and variance functions, subject to different forms of normal, moderate and heavy tailed noise. The considered DGPs, noise distributions and different forms of the conditional mean and variance functions (depending on the number of structural breaks in the data) is described in \Cref{tab:all-simulation-settings}. To detect the power of the test and for assessing the accuracy of the proposed binary segmentation algorithm, structural breaks are randomly introduced in the entire horizon and the segments mentioned in \Cref{tab:all-simulation-settings} refer to different parts generated because of the break-points. 
\begin{table}[!ht]
    \centering
    \caption{Specifications of various components of the simulation settings.}
    \label{tab:all-simulation-settings}
    {
    \begin{tabular}{lll}
    \toprule
    \textbf{Segment} & \textbf{Conditional mean} $\bm{\mu(x)}$ & \textbf{Conditional variance} $\bm{\sigma^2(x)}$ \\
    \midrule
    Segment 1 & $0.5+0.2x$ & $1$ \\
    Segment 2 & $0.1+0.3x^2+0.1x^3+0.2x^4$ & $x^2$ \\
    Segment 3 & $\log(0.4+0.1x^2)$ & $0.1+0.4x^2$ \\
    Segment 4 & $\exp{(0.01x)}$ & $0.5+(0.8+x)^4$ \\
    Segment 5 & $0.9\sin{(x)}$ & $\log(1+0.4x^2)$ \\
    \midrule
    \textbf{DGP} & \textbf{Model Structure} & \textbf{True Parameter Values} \\
    \midrule
    White noise & \( y_t \sim \mathcal{N}(\mu, \sigma^2) + \epsilon_t \) & \( \mu = 0, \sigma = 1 \) \\ 
    ARMA-GARCH & $ y_t = \mu+ \phi_1 y_{t-1}  + \epsilon_t +\theta_1\epsilon_{t-1},$ & \( \mu=0, \phi_1 = 0.5, \theta_1 = -0.4 \) \\
    & $\epsilon_t\sim\gauss(0,\sigma_t^2), \sigma_t^2=\omega+\alpha_1\epsilon_{t-1}^2+\beta_1\sigma_{t-1}^2$ & $\omega=0.1,\alpha_1=0.1,\beta_1=0.8$ \\
    TAR & \(y_t = \begin{cases} \phi_{11} y_{t-1} + \phi_{12} y_{t-2} + \epsilon_t, y_{t-1} \leqslant 0 \\
    \phi_{21}y_{t-1} + \phi_{22} y_{t-2} + \epsilon_t, y_{t-1} > 0 \end{cases} \) & $\begin{array}{l}
    \phi_{11} = 0.6, \phi_{12} = 0.3, \\
    \phi_{21} = -0.6, \phi_{22} = 0.4 
    \end{array}$  \\ 
    \midrule
    \textbf{Noise process} & \textbf{Noise distribution} & \textbf{Parameter specification}  \\
    \midrule
    Normal & $\gauss(\mu,\sigma^2)$ & $\mu=1, \sigma=1$ \\ 
    Student's t & $t_{\nu}$ & $\nu=10$ \\ 
    Power law & $f(x)=x_0\alpha x^{1-\alpha}$ & $x_0=1,\alpha=0.6$ \\
    \bottomrule
    \end{tabular}}
\end{table}

It is pertinent to discuss the choices of the conditional mean and variance structures used in this simulation study. The mean function $\mu(x)$ incorporates a polynomial regression function of different degrees in segments 1 and 2. Segment 3 employs a log-linear regression, frequently used in econometrics to capture diminishing returns and stabilized volatility in financial data. Segment 4 features an exponential regression, widely adopted in finance for modeling compound interest and in biological contexts for growth rates. Finally, in segment 5 we utilize a trigonometric regression model, suitable for modeling seasonal behaviors common in meteorology, economics, and biology. For the variance function $\sigma^2(x)$, the first segment assumes homoskedasticity, while the other specifications are well-established in the literature for addressing heteroskedasticity \citep{palm19967}. Particularly, segments 2 and 3 consider standard GARCH  structures, segment 4 employs a power GARCH framework, and segment 5 utilizes an exponential GARCH approach.

In \Cref{tab:size_power_sigma} we present the simulation performance of the test of changepoints in conditional variance. We observe that the proposed test shows fairly well empirical size control in all considered data generating processes, sample sizes and innovation distributions, with rejection frequencies under the null very close to the nominal level. The test has reasonably high power for a sample size of ($500$) ranging from about $0.56$ to $0.89$ and increases consistently with increasing sample size. The TAR model provides the highest power, being above $0.90$ for large sample sizes. We show that the procedure performs well under Gaussian, Student-$t$ and power-law innovations, so that the performance is not sensitive with respect to the type of the underlying noise distribution. 

\begin{table}[!ht]
\centering
\caption{Performance of the test when there is a single structural break in conditional variance.}
\label{tab:size_power_sigma}
    \begin{tabular}{clcccccc}
    \toprule
     & & \multicolumn{2}{c}{Noise: $\gauss(0,1)$} & \multicolumn{2}{c}{Noise: $t_{10}$} & \multicolumn{2}{c}{Noise: Power law} \\
     Sample size & DGP & Size & Power & Size & Power & Size & Power \\
     \midrule
     500 & White Noise & 0.00 & 0.65 & 0.01 & 0.61 & 0.00 & 0.71 \\
     & ARMA-GARCH & 0.00 & 0.56 & 0.01 & 0.61 & 0.00 & 0.68 \\
     & TAR & 0.01 & 0.81 & 0.01 & 0.73 & 0.03 & 0.89 \\
     \midrule
     1000 & White Noise & 0.00 & 0.66 & 0.01 & 0.69 & 0.01 & 0.78\\
     & ARMA-GARCH & 0.00 & 0.60 & 0.01 & 0.75 & 0.00 & 0.82  \\
     & TAR & 0.00 & 0.88 & 0.03 & 0.88 & 0.01 & 0.91 \\ \midrule
     2000 & White Noise & 0.01 &0.78 & 0.01& 0.72 & 0.00 & 0.86\\
     & ARMA-GARCH & 0.00 & 0.74 & 0.02  & 0.73 &0.03  & 0.84\\
     & TAR & 0.05 & 0.92 &0.04 &0.93 &0.05 &0.97 \\
     \bottomrule
    \end{tabular}
 \end{table}

Next, in \Cref{tab:size_power_mu_sigma}, the performance of the test of structural stability in both mean and variance which, as discussed in Section 3.2 of the main manuscript, is conducted using a Holm-Bonferroni correction. The simultaneous test has around 50\% to 70\% power in smaller sample sizes, working comparatively well in the case of heavy-tailed noises. As the sample size increases the power of the test gets better across all combinations, consistently reaching values in the range of 80\% to 90\%.

\begin{table}[!ht]
\centering
\caption{Performance of the test when there is a single structural break in conditional mean or variance.}
\label{tab:size_power_mu_sigma}
    \begin{tabular}{clcccccc}
    \toprule
     & & \multicolumn{2}{c}{Noise: $\gauss(0,1)$} & \multicolumn{2}{c}{Noise: $t_{10}$} & \multicolumn{2}{c}{Noise: Power law} \\
     Sample size & DGP & Size & Power & Size & Power & Size & Power \\
     \midrule
     500 & White Noise & 0.00 & 0.66 & 0.01 & 0.52 & 0.01 & 0.72 \\
     & ARMA-GARCH & 0.02 & 0.44 & 0.00 & 0.52 & 0.01 & 0.62 \\
     & TAR & 0.00 & 0.72 & 0.01 & 0.68 & 0.00 & 0.74 \\
     \midrule
     1000 & White Noise & 0.00 & 0.78 & 0.00 & 0.70 & 0.01 & 0.74\\
     & ARMA-GARCH & 0.01 & 0.70 & 0.02 & 0.68 & 0.00 & 0.60  \\
     & TAR & 0.01 & 0.80 & 0.00 & 0.64 & 0.00 & 0.80 \\ \midrule
     2000 & White Noise & 0.01 &0.84 & 0.01& 0.80 & 0.00 & 0.82\\
     & ARMA-GARCH & 0.00 & 0.90 & 0.01  & 0.92 &0.00  & 0.88\\
     & TAR & 0.02 & 0.85&0.00 &0.87 &0.01 &0.91 \\
     \bottomrule
    \end{tabular}
\end{table}

Let us move on to the evaluation of the performance of the structural break detection algorithm. Here, we introduce a randomly generated number of breaks in the conditional mean and/or variance functions, placing these breaks at random locations within the data. We ensure that there is a minimum gap of $100$ days between any two breaks to maintain distinct separations and the maximum number of breaks for sample sizes of up to $2000$ is $4$. The effectiveness of the detection process is assessed using several metric. At first we calculate a `deviation' metric, which quantifies the average distance between the detected breaks and the nearest actual structural break. Mathematically, if $\left\{\text{CP}_{1},\text{CP}_2,\hdots,\text{CP}_m\right\}$ are the true structural breaks, then we define the average minimum deviation measure (AMD) as the mean of
\begin{equation*}
    \mathrm{\text{MD}} = \sum_{r=1}^{m'}{\min_{1 \leqslant i \leqslant m}\left\{\abs{\widehat\tau_r-\text{CP}_m}\right\}},
\end{equation*}
taken over all repetitions, where $\left\{\widehat\tau_1,\widehat\tau_2,\hdots,\widehat\tau_{m'}\right\}$ are the detected structural breaks in the data. This average deviation measure, computed over $50$ iterations, provides a quantitative assessment of the accuracy of our detection algorithm by comparing the detected breaks to the true breaks, highlighting the precision and reliability of the procedure. We further compute the average error in terms of number of detected structural breaks, denoted as `ADN' (Average Deviation in Numbers) as the average of 
\begin{equation*}
    \mathrm{DN} = \abs{m-m'}.
\end{equation*}

For the sake of brevity we only show the performance of the detection procedure in the presence of structural breaks in both conditional mean and variance, presented in \Cref{tab:detection_mu_sigma}. For comparison, we also include the accuracy of detected structural breaks using the nonparametric PELT algorithm \citep{haynes2017computationally}. We observe that the nonparametric PELT algorithm severely overestimates the number of structural breaks present in the data, especially for larger sample sizes and more complex data structures. While it helps the method achieve better accuracy in terms of the minimum deviation metric in few cases, it is practically not a suitable approach. Our method, in contrast, estimates at most one additional structural break on average while typically retaining a deviation of less than $\pm 70$ data points from the true structural break. Only exception is when the covariate series comes from an ARMA-GARCH process and there are higher number of structural breaks in the main data. In that scenario, the error in deviation from the true break-point is in the range of 100 to 150.

\begin{table}[ht]
\centering
\caption{Performance of binary segmentation and nonparametric PELT algorithms where there exists a random number of structural breaks in either conditional mean and/or variance}
\label{tab:detection_mu_sigma}
{\footnotesize
\begin{tabular}{@{}ccccccc@{}}
\toprule
\textbf{Sample Size} & \textbf{DGP} & \textbf{Noise} & \multicolumn{2}{c}{\textbf{AMD}} & \multicolumn{2}{c}{\textbf{ADN}} \\ 
                    &               &                & \textbf{CPFind} & \textbf{PELT} & \textbf{CPFind} & \textbf{PELT} \\ \midrule
500                 & White Noise   & $\gauss(0,1)$           & 53.18          & 24.49         & 0.32            & 0.97          \\
500                 & White Noise   & $t_{10}$              & 57.88           & 21.96         & 0.28            & 0.96          \\
500                 & White Noise   & Power law      & 17.30           & 29.46         & 0.24            & 1.03          \\ \hline
500                 & ARMA-GARCH    & $\gauss(0,1)$           & 76.52          & 32.21         & 0.42            & 1.10          \\
500                 & ARMA-GARCH    & $t_{10}$              & 82.98          & 29.21        & 0.48            & 1.12          \\
500                 & ARMA-GARCH    & Power law      & 41.24          & 29.27        & 0.22            & 1.40          \\ \hline
500                 & TAR           & $\gauss(0,1)$           & 61.20          & 30.02         & 0.26            & 1.21          \\
500                 & TAR           & $t_{10}$              & 65.50          & 29.14         & 0.22            & 1.38          \\
500                 & TAR           & Power law      & 24.06          & 38.02         & 0.20            & 1.38          \\ \hline
1000 & White Noise   & $\gauss(0,1)$           & 42.34          & 34.90         & 0.46            & 1.46          \\
1000 &  White Noise   & $t_{10}$              & 68.76           & 95.76         & 0.52            & 1.34          \\
1000 &  White Noise   & Power law      & 17.54           & 99.76         & 0.68            & 2.22          \\ \hline
1000 &  ARMA-GARCH    & $\gauss(0,1)$           & 104.46          & 55.40         & 0.60            & 2.98          \\
1000 &  ARMA-GARCH    & $t_{10}$              & 169.84          & 108.21        & 0.80            & 2.34          \\
1000 &  ARMA-GARCH    & Power law      & 48.28          & 132.91        & 0.38            & 3.60          \\ \hline
1000 &  TAR           & $\gauss(0,1)$           & 53.06          & 83.65         & 0.46            & 3.76          \\
1000 & TAR           & $t_{10}$              & 64.28          & 85.65         & 0.44            & 3.38          \\
1000 & TAR           & Power law      & 30.82         & 70.63         & 0.66            & 3.68          \\  \bottomrule
2000 & White Noise  & $\gauss(0,1)$           & 57.14       & 67.30         & 1.06 & 1.76  \\
2000 & White Noise  & $t_{10}$             & 53.73       & 56.02        & 1.46 & 1.68  \\
2000 & White Noise  & Power law      & 24.94        & 91.73        & 1.71 & 1.42  \\ \hline
2000 & ARMA-GARCH   & $\gauss(0,1)$           & 146.88       & 62.57        & 1.10 & 3.58  \\
2000 & ARMA-GARCH   & $t_{10}$            & 120.00       & 109.77        & 0.69 & 3.54  \\
2000 & ARMA-GARCH   & Power law      & 115.65        & 113.92        & 0.82 & 3.46  \\ \hline
2000 & TAR          & $\gauss(0,1)$           & 72.00        & 68.38        & 1.62 & 5.34  \\
2000 & TAR          & $t_{10}$              & 47.31       & 76.23        & 1.88 & 5.26  \\
2000 & TAR          & Power law      & 21.24        & 68.02        & 1.70 & 4.96  \\ 
\bottomrule
\end{tabular}
}
\end{table}

\section{Additional analysis of Bitcoin data}

We begin with a comparative analysis of the segments identified by the structural break tests using key descriptive measures. Here, ``JB-Test'' denotes Jarque-Bera test of normality, ``LB-Test'' denotes the Ljung-Box test for autocorrelation, ``LM-Test'' denotes the Lagrange multiplier test for heteroskedasticity and ``ADF-Test'' denotes the augmented Dickey-Fuller test for stationarity. All tests are conducted at 5\% level of significance. 

\begin{table}[!ht]
\centering
\footnotesize
\caption{Descriptive statistics for log-price, returns and GNIS, for the three segments generated by the detected structural breaks. Segment 1 refers to 1 January 2020 to 7 March 2021, Segment 2 is 8 March 2021 to 8 July 2023, and Segment 3 is from 8 July 2023 until 4 September 2024.}
\setlength{\tabcolsep}{5pt} 
\resizebox{\textwidth}{!}{ 
\begin{tabular}{@{}lccccccccc@{}}
\toprule
\textbf{Entire data}               & \textbf{Mean (SD)} & \textbf{Range} & \textbf{Quartiles} & \textbf{Skewness} & \textbf{Kurtosis} & \textbf{JB Test} & \textbf{LB Test} & \textbf{LM Test} & \textbf{ADF Test} \\ \midrule
\textbf{Log-Price}      & 10.24 (0.64)       & (8.50, 11.20)  & (9.86, 10.32, 10.75) & $-0.57$          & 2.37              & 0.00            & 0.00             & 1.00             & 0.76              \\
\textbf{Log-Returns}    & 0.00 (0.03)        & ($-0.32$, 0.16) & ($-0.02$, 0.00, 0.02) & $-0.60$         & 10.60             & 0.00            & 0.41             & 0.00             & 0.01              \\
\textbf{GNIS}           & 28.72 (16.32)      & (10, 100)       & (17.82, 22.71, 35.00) & 1.79            & 6.51              & 0.00            & 0.00             & 1.00             & 0.02              \\
\midrule
\textbf{Segment 1}               & \textbf{Mean (SD)} & \textbf{Range} & \textbf{Quartiles} & \textbf{Skewness} & \textbf{Kurtosis} & \textbf{JB Test} & \textbf{LB Test} & \textbf{LM Test} & \textbf{ADF Test} \\ \midrule
\textbf{Log-Price}      & 9.46 (0.58)       & (8.50, 10.95)  & (9.12, 9.26, 9.78) & 1.07          & 3.10              & 0.00            & 0.00             & 1.00             & 0.79              \\
\textbf{Log-Returns}    & 0.00 (0.00)        & ($-0.36$, 0.02) & (0.00, 0.00, 0.00) & $-1.32$         & 18.11             & 0.00            & 0.91             & 0.00             & 0.01              \\
\textbf{GNIS}           & 27.67 (22.99)      & (10, 98)       & (14.00, 16.93, 28.96) & 1.75            & 4.81              & 0.00            & 0.00             & 1.00             & 0.56              \\
\midrule
\textbf{Segment 2}               & \textbf{Mean (SD)} & \textbf{Range} & \textbf{Quartiles} & \textbf{Skewness} & \textbf{Kurtosis} & \textbf{JB Test} & \textbf{LB Test} & \textbf{LM Test} & \textbf{ADF Test} \\ \midrule
\textbf{Log-Price}      & 10.38 (0.40)       & (9.66, 11.12)  & (10.03, 10.36, 10.72) & $-0.01$          & 1.80              & 0.00            & 0.00             & 1.00             & 0.78              \\
\textbf{Log-Returns}    & 0.00 (0.00)        & ($-0.02$, 0.01) & (0.00, 0.00, 0.01) & $-0.37$         & 6.90            & 0.00            & 0.30             & 0.00             & 0.01              \\
\textbf{GNIS}           & 32.16 (14.35)      & (15, 100)       & (20.57, 29.71, 39.00) & 1.32            & 5.17              & 0.00            & 0.00             & 1.00             & 0.01              \\
\midrule
\textbf{Segment 3}               & \textbf{Mean (SD)} & \textbf{Range} & \textbf{Quartiles} & \textbf{Skewness} & \textbf{Kurtosis} & \textbf{JB Test} & \textbf{LB Test} & \textbf{LM Test} & \textbf{ADF Test} \\ \midrule
\textbf{Log-Price}      & 10.79 (0.35)       & (10.13, 11.20)  & (10.35, 10.72, 11.06) & $-0.31$          & 1.58              & 0.00            & 0.00             & 1.00             & 0.92              \\
\textbf{Log-Returns}    & 0.00 (0.00)        & ($-0.01$, 0.01) & (0.00, 0.00, 0.00) & 0.06         & 4.20             & 0.00            & 0.41             & 0.00             & 0.01              \\
\textbf{GNIS}           & 22.92 (8.28)      & (12, 54)       & (18.00, 20.71, 25.14) & 1.64            & 5.83              & 0.00            & 0.00             & 1.00             & 0.67              \\
\bottomrule
\end{tabular}
}
\label{tab:decriptive_stats}
\end{table}

The descriptive statistics and these tests provide a clear picture of the market behavior across the three segments. The first one, i.e., 1 January 2020 to 7 March 2021, is characterized by relatively lower log-prices, high volatility in log-returns (as indicated by the high kurtosis), and the most dispersed GNIS values. The positive skewness of log-prices and negative skewness of log-returns highlight significant price increases and occasional sharp decreases during this period. In this segment, the JB test reveals significant deviations from normality, particularly in log-returns, which exhibit high kurtosis. The LB test indicates autocorrelation in log-returns, while the LM test suggests heteroskedasticity, though only in log-returns. The ADF test confirms the stationarity of log-returns, but log-prices and GNIS are borderline non-stationary. 

Segment 2  (8 March 2021 to 8 July 2023) shows a rise in the mean log-price, reduced volatility in log-returns (lower kurtosis), and increased GNIS, suggesting greater market stability and media attention compared to Segment 1. Log-prices have a nearly symmetrical distribution, indicating fewer extreme price movements. The JB test in this segment shows improvements in normality for log-prices and GNIS, though log-returns still deviate. Autocorrelation weakens during this period, as seen in the LB test, and there is no evidence of heteroskedasticity in any variables, reflecting greater stability. However, the ADF test still indicates trends in log-prices and GNIS, while log-returns remain stationary.

Finally, Segment 3 (8 July 2023 to 4 September 2024) marks the highest average log-prices with relatively lower GNIS compared to Segment 2, showing a moderate decrease in media attention. The volatility in log-returns remains low, while skewness and kurtosis suggest more stabilized market behavior, almost approaching the normal distribution. While log-returns continue to exhibit non-normality, the deviations are less pronounced, and there is no autocorrelation or heteroskedasticity detected. The volatility in GNIS diminishes, and stationarity is confirmed for log-returns, though trends persist in log-prices and GNIS.

As a last piece of empirical illustration, \Cref{fig:before_after_cb} presents the confidence bands for the disparity in mean regression functions caused by the two structural breaks, as well as the conditional variance function in the entire time horizon. 

\begin{figure}[!ht]
\centering
\captionsetup{justification=centering}
\centerline{\includegraphics[width=0.65\textwidth,keepaspectratio]{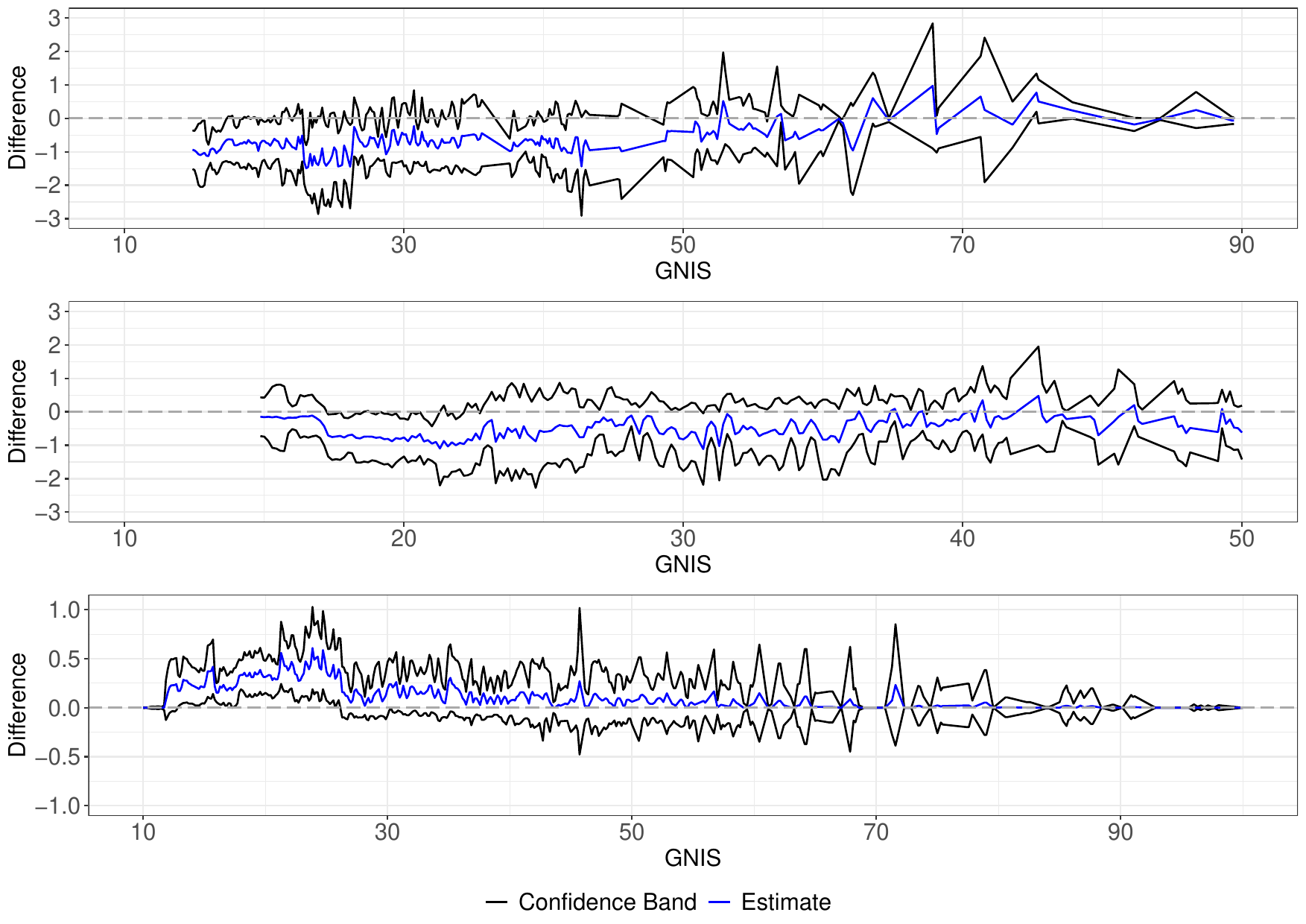}}
\caption{Confidence intervals of: (top) the difference in $\mu(x)$ before and after the first structural break, (middle) the difference in $\mu(x)$ before and after the second structural break, (bottom) $\sigma^2(x)$.}
\label{fig:before_after_cb}
\end{figure}

The top panel of \Cref{fig:before_after_cb} indicates a slightly increasing trend in the effect of GNIS on the mean function. It further highlights that when news attention is low, there are greater fluctuations in the difference between the average log-price behavior on either side of the first structural break. However, as news attention increases, the confidence band narrows and the difference in the effect on mean log-price diminishes. It implies that lesser news attention has less impact on the average price in the first segment as compared to the second, whereas heightened news attention brings more certainty in the mean price movement surrounding the break. The middle panel of the same plot illustrates a uniformly wider confidence band of the disparity in the mean function, implying that the estimate of the difference has more variability around the second structural break.

The bottom panel of the same figure examines the conditional variance function. The confidence band for the conditional variance is uniformly much narrower than the bands for the disparity in mean level. This shows that the conditional variance has been more consistent throughout the time horizon, further justifying the absence of a structural break. Additionally, we observe a similar trend as in the conditional mean: lower news attention correlates with greater volatility, as seen by wider fluctuations in the variance. Conversely, with increased news coverage, the volatility stabilizes, demonstrating that as the media focuses more on Bitcoin, the volatility in its price becomes more predictable and less prone to sudden changes. Overall, the figure underscores how news sentiment impacts both the mean and variance of Bitcoin prices, with greater news attention leading to more stable price behavior. This analysis underscores the significant role of news attention in shaping both price behavior and volatility patterns in the Bitcoin market.

\end{document}